%

\documentclass[aoas]{imsart}
\usepackage[linesnumbered,ruled,vlined]{algorithm2e}
\usepackage{upgreek}
\usepackage{threeparttable}
\usepackage{multirow}
\RequirePackage[OT1]{fontenc}
\RequirePackage{amsthm,amsmath}
\RequirePackage{natbib}  
\RequirePackage[colorlinks,citecolor=blue,urlcolor=blue]{hyperref}
\RequirePackage{graphicx}
\RequirePackage{dsfont}

\RequirePackage{bm}  


\startlocaldefs
\theoremstyle{plain}

\newtheorem{theorem}{Theorem}[section]
\newtheorem{lemma}[theorem]{Lemma}
\theoremstyle{remark}


\endlocaldefs

\begin{document}
	
	\begin{frontmatter}
		
		\title{An Efficient Doubly-robust Imputation Framework for Longitudinal Dropout, with an Application to an Alzheimer's Clinical Trial}
		
		\begin{aug}
			\author[A]{\fnms{Yuqi} \snm{Qiu}\ead[label=e1]{yqqiu@fem.ecnu.edu.cn}}
			\and
			\author[B]{\fnms{Karen} \snm{Messer}\ead[label=e2]{kmesser@health.ucsd.edu}}
			\address[A]{School of Statistics,
				East China Normal University,
				\printead{e1}}
			
			\address[B]{Division of Biostatistics and Bioinformatics,
				University of California San Diego,
				\printead{e2}}
		\end{aug}

		\begin{abstract}
			
			We develop a novel doubly-robust (DR) imputation framework for longitudinal studies with monotone dropout, motivated by the informative dropout that is common in FDA-regulated trials for Alzheimer's disease.  In this approach, the missing data are first imputed using a doubly-robust augmented inverse probability weighting (AIPW) estimator,  then the imputed completed data are substituted into a full-data estimating equation, and the estimate is obtained using standard software. 
			The imputed completed data may be inspected and compared to the observed data, and standard model diagnostics are available. The same imputed completed data can be used for several different estimands, such as subgroup analyses in a clinical trial, allowing for reduced computation and increased consistency across analyses.	We present two specific DR imputation estimators, AIPW-I and AIPW-S, study their theoretical properties, and investigate their performance by simulation.  AIPW-S has substantially reduced computational burden compared to many other DR estimators, at the cost of some loss of efficiency and the requirement of stronger assumptions. Simulation studies support the theoretical properties and good performance of the DR imputation framework.  Importantly, we demonstrate their ability to address time-varying covariates, such as a time by treatment interaction.  We illustrate using data from a large randomized Phase III trial investigating the effect of donepezil in Alzheimer's disease,  from the Alzheimer's Disease Cooperative Study (ADCS) group.

		\end{abstract}
		
		\begin{keyword}
			\kwd{Doubly-robust Estimator}
			\kwd{Monotone Dropouts}
			\kwd{Imputation Methods}
			\kwd{Alzheimer's Disease}
			\kwd{Randomized Trials}
		\end{keyword}
		
	\end{frontmatter}

	\section{Background and a motivating example}
	Dropout rates of 25\% or more are common in large randomized trials for Alzheimer's disease and other dementias.  The dropout rate is generally higher for patients with more severe disease and also for patients receiving active treatment ({\bf Figure} \ref{MCIplot}). Thus, estimates of treatment effect from the trial may  be substantially biased unless the dropout is properly accounted for.  
	However, the primary analysis of such a trial is tightly pre-specified.  US Food and Drug Administration (FDA) guidance and practices support the use of a  mixed-effects  model with repeated measures (MMRM) or 
	a generalized estimating equations (GEE) approach with a restricted set of  covariates;  the primary analysis usually tests a model-adjusted estimate of a contrast between treatment arms.   Because of restricted covariates and modeling strategies, it is very possible that the covariate adjustment may be insufficient to assure unbiased estimates from the model.  To address this problem in the regulatory setting of clinical trials, additional imputation-based sensitivity analyses are recommended to assess  the potential  bias from dropout, usually through sequential multiple imputation by chained equations \citep{FDAguidance, EMAguidance}.  
	
	Motivated by this setting, we investigate a principled doubly-robust approach to such imputation for longitudinal dropout, derived from the theory of optimal augmented inverse probability weighted (AIPW) estimators, as developed by \cite{Bang2005,Tsiatis2006,  Seaman2009, Tsiatis2011,Rot2012, Schnitzer2016} and others in the longitudinal setting.    Formally, the interest lies in obtaining a robust, consistent and asymptotically normally distributed (CAN) estimate of an estimand  $\beta$, defined as the solution to a pre-specified   estimating equation $E[ U(\beta)]=0$.  
	A simple example is where $\beta = E(Y)$ and $U(\beta) = Y - \beta$.
	If $U(\beta)$ is applied to the "full" data (fully observed data with no dropout), the solution  $\hat \beta$ is a CAN estimator of $\beta$.  However, the primary intent-to-treat analysis from the trial usually applies $U(\beta)$ to the observed data. Under a missing at random (MAR) assumption conditional on covariates, this would indeed provide a consistent estimator  if all  covariates are appropriately controlled for.  As explained above, however, in our setting $U$ is usually a GEE  derived from a generalized linear model with limited covariates, rather than a fully specified likelihood. Therefore, the approach in practice is generally not assumed to provide unbiased estimates in the presence of missing data.   
	
	Consistent estimators in  the  setting of longitudinal monotone dropout have been well studied in theory, including doubly-robust (DR) estimators; for a recent review see 
	\cite{Seaman2018}.   \cite{ChoPaik1997} constructed an imputation approach to obtain consistent estimators for longitudinal data with dropout, using a sequential regression algorithm to impute the missing outcome values.   \cite{Robins1995} 
	noted that  consistent estimates of $\beta$ may  be obtained by solving an  inverse probability weighted (IPW) version of the  estimating functions $U(\beta)$ applied to the data from completers only,  or by using an AIPW estimator, which increases the efficiency by augmenting the IPW estimator with the observed information from dropout subjects using a regression model. 	%
	\cite{Scharfstein1999} noticed that these AIPW estimators are doubly-robust (DR), in that they are consistent when either the regression model or the IPW model is correctly specified. 
	\cite{Bang2005} proposed a   regression representation of an AIPW estimator, using a recursive regression approach which incorporates the IPW's as covariates; however in the longitudinal setting they only explicitly studied the case where $\beta =E [Y]$ at last visit.  \cite{Tsiatis2006} developed the theory of  optimal AIPW estimating equations in the  setting of longitudinal data with monotone dropout for a general class of $U(\beta)$, and implemented an improved DR estimator with MMRM as the regression model in \cite{Tsiatis2011}, which, however, is quite complex in practice. 
	Following \cite{Tsiatis2006},  \cite{Seaman2009} considered $U(\beta)$ arising as generalized estimating equations, and used Paik's sequential regression framework to construct an AIPW estimator for the regression coefficients.   \cite{Rot2012}  further developed an optimal AIPW estimator which may obtain good efficiency when the outcome model is misspecified, and provided simulation studies in the cross-sectional setting. 
	More recently, 
	\cite{Schnitzer2016} adapted the \cite{Bang2005} approach in the longitudinal setting to include more general estimands defined by a GLM with baseline covariates, however the class of estimands in particular does not include  the regression coefficient of a time-varying covariate. \cite{Schnitzer2016} also study a closely related targeted maximum likelihood estimator (TMLE) \citep{van2006} 
	and show that it has similar performance as the adapted Bang and Robins method by simulation.  \cite{Long2012} and \cite{Hsu2016} considered incorporating a DR estimator into a multiple imputation (MI) approach in the cross-sectional setting, however this differs from the longitudinal  approach we consider here.

	It is worth noting that, compared with the cross-sectional setting,  DR methods for longitudinal data are comparatively less well-developed and less often used in practice.  In particular, we know of relatively few simulation studies  that study these longitudinal doubly-robust estimators \citep{Seaman2009, Tsiatis2011, Schnitzer2016}.  While several approaches can in theory estimate the coefficient of a  time varying covariate  \citep{Seaman2009, Tsiatis2011}, we did not find simulation studies that cover this case which is critical to the clinical trials setting. We are unfamiliar with examples of the use of DR estimators in clinical trials, where we think they may have a useful and important role.
	
	In this paper, we further develop results from  \cite{Seaman2009}  into a general  imputation-based framework for construction of a locally efficient doubly-robust estimator in the setting of longitudinal data with monotone dropout.  
	The approach is to impute a complete dataset using a suitable doubly-robust estimator, and then apply the estimating functions $U(\beta)$ to the fully-imputed data.  We first investigate a standard AIPW-based doubly-robust estimator (AIPW-I) within the framework, and then we propose a simpler doubly-robust estimator (AIPW-S) also within the framework. Confidence intervals are provided by the bootstrap, in an imputation setting which is familiar to practitioners of clinical trials. This imputation framework has several advantages compared to existing approaches in the literature: 1) once the imputed completed data are obtained, the same completed dataset will support doubly-robust estimation of additional estimands, such as additional additional group or subgroup effects, providing computational efficiency and consistency across multiple analyses;  2) the imputed completed data can be inspected using standard descriptive statistics, to better understand the behavior of the resulting doubly-robust estimator and any differences from the primary analysis  (which uses the observed data directly).   Both the imputation step and the analysis step use well-understood modeling approaches which are easy to pre-specify and are suitable for the clinical trials setting. The imputation framework is applicable to any AIPW-based doubly-robust estimator, including the existing highly developed doubly-robust methods and we hope will facilitate the construction of doubly robust estimates of causal effects in clinical trial settings.  Importantly, our proposed framework supports doubly robust estimation of the coefficient of a  time-varying covariate as the estimand, such as a treatment by time interaction term.
	
	
	The outline of the approach is as follows: for $U(\beta)$ a GEE, \cite{Seaman2009} constructed a DR estimator of regression coefficients by substituting Paik's sequential regression into the optimal longitudinal AIPW equations of \cite{Tsiatis2006}, and solving  the resulting equations by Newton-Raphson.  Our contribution is to notice that the resulting estimating equations might be rearranged into an imputation form, after establishing some algebraic identities, and to exploit the result.  The resulting estimator  only requires standard software tools, and  inherits the optimality properties of the doubly-robust imputation for a wide class of estimating functions.  Using this framework, we develop a computationally simpler doubly-robust estimator, AIPW-S, under explicit assumptions which may be practical in clinical trials and other settings.  Finally, we study the performance of these estimators through simulation, and through an application to the primary estimands of interest in a clinical trial of prodromal  Alzheimer's Disease. These simulation studies may be of independent interest, as they add to the sparse literature on longitudinal AIPW approaches.

	\subsection{Organization of this paper}
	In Section 2, we describe our motivating example. Section 3 gives notation and the details of consistent  IPW and sequential regression approaches for longitudinal data with dropout. These form the building blocks of our DR imputation method. We briefly review existing DR methods for longitudinal data in Section 4. In Section 5, we develop the AIPW-based DR imputation framework for general longitudinal data estimating equations.  In Section 6, we use the DR imputation framework to define two specific DR estimators for longitudinal data with dropout, AIPW-I and AIPW-S.  We use simulation to compare AIPW-I and AIPW-S with the original Bang and Robins estimator 
	as well as with maximum likelihood and GEE approaches, in Section 7. Section 8 presents an application to the donepezil trial in Alzheimer's disease, and Section 9 is discussion and conclusions. 
	
	\section{A motivating example: the MCI trial of donepezil}\label{MCI}
	
	\begin{figure}[h]
		\centering
		\includegraphics[width=\textwidth]{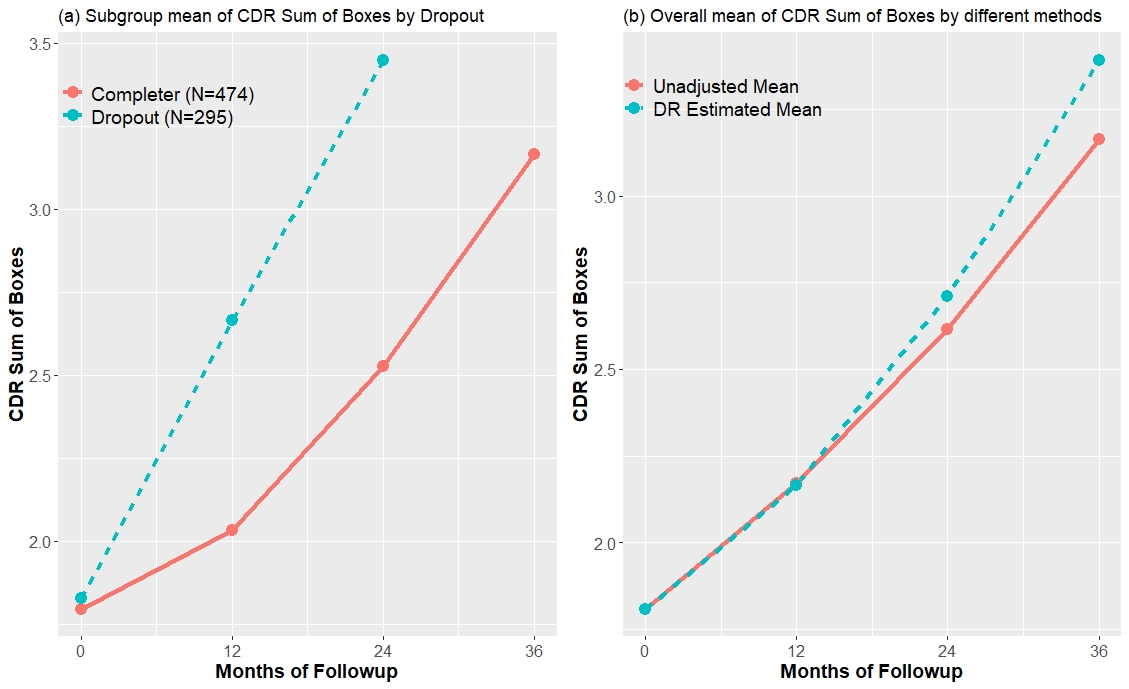}
		\caption{Time course of the mean Clinical Dementia Rating Sum of Boxes (CDR-SOB) score, using all subjects who provided data on this measure in the MCI trial (n=769). The CDR-SOB is commonly used in Alzheimer's disease to stage severity of dementia, with  a higher score indicating worse severity; it is expected to increase over time in this population with mild cognitive impairment (MCI) at baseline. Left panel:  mean CDR-SOB at each time point, for subjects who completed the study (red, 62\% of subjects) vs. subjects who dropped out before the end of the study (blue, 38\% of subjects).  Subjects who dropped out are observed to have higher change scores. Right panel:  mean CDR-SOB using the available data at each time point (red); doubly-robust estimate of the mean CDR-SOB, using all available data and adjusting for dropout (blue). }
		\label{MCIplot}
	\end{figure}
	Donepezil is a widely used cholinesterase inhibitor that improves symptoms and may delay the clinical diagnosis of Alzheimer's disease (AD) in subjects with the amnestic form of mild cognitive impairment (MCI).  A randomized, double-blind, placebo-controlled, parallel-arm trial was conducted by the Alzheimer's Disease Cooperative Study (ADCS) between March 1999 and January 2004  \citep{Petersen2005}. The study compared the time to progression from MCI to possible or probable AD  among 769 subjects with MCI who were randomized to receive donepezil (n=253),  vitamin E (n=257), or placebo (n=259) for 36 months.  Final dropout rates were 42.7\%, 38.1\%, and 32.0\% for the donepezil, Vitamin E, and placebo arms, respectively. Figure \ref{MCIplot} displays evidence of  bias due to dropout for the mean  score on the Clinical Dementia Rating sum of boxes (CDR-SOB),  one of the common primary outcome measures of AD trials.  The left panel shows that patients who eventually withdrew from the study had a much higher CDR-SOB (higher is worse) than the completers, indicating that those who dropped out had greater cognitive impairment, and that this gap increased over time. The figure on the right shows that the estimated mean from a DR approach is consistently higher than the mean of the observed data, especially at the later times with more dropout,  indicating that a DR approach may help to improve estimated effects from this trial.  These data are taken from the data archives at the ADCS;  these and similar data are often used in simulation studies that inform the design of current AD trials.

	\section{Regression modeling and inverse probability weighting approaches to longitudinal dropout}
	\subsection{Notation and data structure}
	Assume we have $N$ iid subjects potentially observed at times $j=1, \dots, M$, and for  individual $i$ at time $j$ there is data  $L_{ij}=(Y_{ij}, X_{ij})$,  where $Y_{ij}$ is a univariate outcome and $X_{ij}$ is a vector of potentially time-varying covariates; when $X_{ij}$ is independent of time, it will be simplified to a vector of always observed baseline covariates $X_{i0}$ for any $j$. Let $\bar{L}_{ij} = (L_{i1}^T,...,L_{ij}^T)^T $ denote the historical data from time $1$ to $j$. We may drop the subscript $i$ when the meaning is clear. We assume the distribution of $\bar L_{iM}$ has finite second moments.

	Each subject can potentially drop out from the study.  Let $R_{ij} \in \{0,1\}$ be a  missing indicator, so that we observe $R_{ij}$ and $(R_{ij}Y_{ij},  R_{ij}X_{ij})$ at time $j$. Under the assumption of monotone dropout, if $R_{j} = 0$ then for any $t>j$, $R_{t} = 0$.  Let $C_{ij}$ be a censoring indicator, where $C_{ij} = 1$ indicates $j$ is the final observed time for subject $i$, otherwise $C_{ij} = 0$;  let $J_i$ be the index of the last observed time point for subject $i$, so that $C_{i,J_i}=1$ for subject $i$. For completers, define $J_i = M$ and $C_{M} = R_{M} = 1$. Under the missing completely at random assumption (MCAR), $R_{j}$ is independent of $\bar L_{M}$.
	Under the missing at random assumption (MAR),  $Pr(R_{j}=0 | \bar{L}_{M}, R_{j-1}=1) = Pr(R_{j}=0 | \bar{L}_{j-1}, R_{j-1}=1)  $ so that the probability of a missing outcome depends only on previously observed data.  We also assume there is probability bounded away from zero of seeing full data over the whole support of $\bar L_{iM}$: $Pr(R_{M}=1 | \bar{L}_{M}) > \epsilon >0$, for some existing $\epsilon$.    
	\subsection{Estimating equations which define the estimand}
	We assume  that there is a 
	vector of parameters  $\mathbf{\beta}$, and a corresponding  vector of sufficiently smooth estimating functions $U(\cdot, \mathbf{\beta})$ such that $\mathbf{\beta^*}$ is the unique solution (the truth) to $E[U(\bar L_{iM}, \mathbf{\beta})]=0$.  One  of the parameters in $\mathbf{\beta^*}$ is the primary estimand of interest in the study. The solution $\mathbf{\hat \beta}$ to the full data estimating equations $\sum_{i=1}^N U_i(\bar L_{iM}, \mathbf{\beta}) =0$ is consistent for $\mathbf{\beta^*}$ and asymptotically normal, by standard arguments.    
	
	It is often assumed that the data follow a generalized linear model for the mean $\mu_{j} = E(Y_{j}|X_{j})$, with   link function  $g(\mu_{j}) = X_{j}\mathbf{\beta}$,  including of course the MMRM of the introduction. 
	Thus, $U(\bar L_{iM},{\beta})$ might be taken to be the score equations from the likelihood, or alternatively, a set of  generalized estimating equations (GEE) \citep{Liang1986} applied to the full data: 
	
	\begin{equation}\label{GEE}
		\displaystyle\sum_{i=1}^{N}U_i(\bar L_{iM})= \displaystyle\sum_{i=1}^{N} \frac{\partial \mathbf{\upmu}^T_i}{\partial \mathbf{\beta}} V^{-1}_i (\mathbf{Y}_i - \mathbf{\upmu}_i) = 0
	\end{equation}
	where $\mathbf{Y}_i = (Y_{i1},...,Y_{iM})^T$, $\mathbf{\upmu}_i = (\mu_{i1},...,\mu_{iM})^T$, and $V_i$ is an assumed working covariance matrix for $\mathbf{Y}_i$. Here, the efficient choice for $V^{-1}_i$ is the true covariance matrix of the data. However, under general regularity conditions, the solution $\mathbf{\hat \beta}$ to the full data GEE's (equation (\ref{GEE})) is consistent for $\mathbf{\beta^*}$ and asymptotically normal, for any arbitrary $V_i$.   
	
	With missing data, instead of equation (\ref{GEE}) we observe  
	\begin{equation}\label{GEEmissing}
		\displaystyle\sum_{i=1}^{N}U_i(\bar L_{iJ})=\displaystyle\sum_{i=1}^{N} \frac{\partial \mathbf{\upmu}^T_i}{\partial \mathbf{\beta}} V^{-1}_i  D_i (\mathbf{Y}_i - \mathbf{\upmu}_i) = 0
	\end{equation}
	where $D_i = \textrm{diag}(R_{i1}, \ldots, R_{iM}). $
	If the data are MCAR, the solution $\mathbf{\hat \beta}$ to equation (\ref{GEEmissing}) remains consistent, because $E[R_{j}Y_{j}]=E[Y_{j}]$, and thus equation (\ref{GEEmissing}) is a consistent estimator of $E[U_i(\bar L_{iM})]$.   However, when the dropout is MAR such that $E[R_{j}Y_{j}] \neq E[Y_{j}],$ the solution $\mathbf{\hat \beta}$ to equation (\ref{GEEmissing}) will be consistent for $\mathbf{\beta}$ if the $U$ are the score equations from the correct likelihood, but not generally otherwise. 
	\subsection{IPW estimating equations, for dropout that is MAR}\label{IPW}
	\cite{Robins1995} showed how to incorporate  inverse-probability weights into $U$ applied to observed data when the dropout is MAR. 
	Let 
	\begin{equation}\label{HZfunc}
		\lambda_{ij} =P(R_{ij}=0|R_{i,j-1}=1, \bar{L}_{i,j-1} )
	\end{equation}
	be the discrete-time hazard function of $i$ at time $j$, let 
	\begin{equation} \label{pi}
		\pi_{ij}=\displaystyle\prod_{t=1}^{j} (1-\lambda_{it}) = P(R_{ij}=1|\bar{L}_{i,j-1})  
	\end{equation}
	be the corresponding probability of being observed at time $j$, and let the weight matrix be $W_i=diag(R_{i1}/\pi_{i1},...,R_{iM}/\pi_{iM}).$  The hazard function can be consistently estimated by logistic regression if the MAR assumption holds; thus $\pi_j$ and $W_\pi$ can be consistently estimated as well. 
	Then the inverse probability weighted GEE (WGEE)  has the estimating equations: 
	\begin{equation}\label{GEE_IPW}
		\displaystyle\sum_{i=1}^{N}U^W_i(\bar L_{iJ})=\displaystyle\sum_{i=1}^{N} \frac{\partial \mathbf{\upmu}^T_i}{\partial \mathbf{\beta}} V^{-1}_i  W_i (\mathbf{Y}_i - \mathbf{\upmu}_i) = 0.
	\end{equation}

	For longitudinal data with dropout that is MAR,  WGEE provides consistent estimates of the parameters  $\mathbf{\beta^*}$ if the $\pi_{ij}$'s are consistent, because then $E[R_{j}Y_{j}/\pi_{j}] =E[Y_{j}]$, even if the working correlation is not correctly specified.
	\subsection{Regression-based sequential imputation, for dropout that is MAR}\label{Paik_algorithm}
	Alternatively, 
	\cite{ChoPaik1997} described a  sequential regression approach for imputing the missing outcome values. 
	Recall that $J_i$ is the last observed visit for subject $i$, so that $Y_{ik} $ is missing for any $k>J_i$. In order to impute $Y_{ik}$, the idea is to conduct a recursive regression process by defining $k-1$ parametric imputation models 
	$
	m^j_{k}(\bar L_j) = E[Y_{k}| \bar{L}_j, R_{j+1} = 1]
	$, where $j$ is taken from $k-1$ to $1$.  By the MAR assumption,  $E[Y_{k}| \bar{L}_j, R_{j+1}=0] = E[Y_{k}| \bar{L}_j, R_{j+1}=1]. $ Hence we may use observed data to construct a consistent estimated model $\hat m^j_{k}(\bar L_j)$, and then use the estimated model to impute missing $Y_{k}$ recursively. This process can then be repeated sequentially for $k = 2,...,M$ in order to obtain a fully imputed data set.
	
	Here we give a formal version of Paik's sequential mean imputation algorithm, which will make the relation with 
	\cite{Bang2005} more explicit :
	
	\begin{algorithm}\label{Paik}
		\KwResult{A fully imputed data $\hat Y^I_{ij}$, in which all longitudinal missing outcomes have been imputed.}
		\textbf{For $\mathbf{k = 2,...,M}$ sequentially:}\\
		\textbf{Initialize:} Identify all subjects $i$ with $J_i \geq k.$ Use these data to regress the observed values of $Y_{ik}$ on $\bar{L}_{i,k-1}$, to obtain a consistently estimated  model $\hat m^{k-1}_{k}(\bar L_{i,k-1}).$ 
		For subjects with $J_i \geq k-1$ let
		\begin{equation*}
			\hat Y^{k-1}_{ik} = 	 \left\{ 
			\begin{array}{ll}
				\hat m^{k-1}_k(\bar L_{i,k-1}) &  \mathrm{ if  } \;J=k-1    \\
				Y_{ik}  &  \mathrm{ if  }  \;J > k - 1  
			\end{array}
			\right.
		\end{equation*}\\
		\textbf{For $\mathbf{s = k-2,...,1}$ recursively:}
		Identify all subjects $i$ with $J_i \geq s+1$ and regress the values of $\hat Y^{s+1}_{ik}$ on $\bar{L}_{is}$ to obtain a consistently estimated model $\hat m^{s}_k(\bar L_{is})$ .  For all subjects with $J_i \geq s $   let
		\begin{equation*}
			\hat Y^s_{ik} = 	 \left\{ 
			\begin{array}{ll}
				\hat m^s_k(\bar L_{is}) &  \mathrm{ if  } \;J_i=s    \\
				\hat Y^{s+1}_{ik}  &  \mathrm{ if  }  \;J_i > s  
			\end{array}
			\right.
		\end{equation*}\\
		
		\textbf{Final step:}
		Output the completed data $\hat Y^I_{ik} = \hat Y^{1}_{ik}$
		\caption{Paik's sequential mean imputation}
		\label{completed_data}
	\end{algorithm}
	
	The above imputation requires $M(M-1)/2$ estimated models. Then estimating equation (\ref{GEE}) is  solved using the completed data:  
	\begin{equation}\label{GEE_Paik}
		\displaystyle\sum_{i=1}^{N}U_i(\hat L^I_{iM})=\displaystyle\sum_{i=1}^{N} \frac{\partial \mathbf{\upmu}^T_i}{\partial \mathbf{\beta}} V^{-1}_i   (\mathbf{\hat Y}^{I}_i - \mathbf{\upmu}_i) = 0.
	\end{equation}
	
	Under regularity conditions, as long as the mean models $m_k^s$ are correctly specified and the data are MAR, this procedure gives consistent estimates of the parameters $\mathbf{\beta^*}$ \citep{ChoPaik1997} .
	
	\section{Doubly-robust estimators for longitudinal data with dropout}
	\subsection{Optimal longitudinal AIPW estimating equations}    
	In the longitudinal setting with known monotone MAR dropout mechanism, 
	\cite{Tsiatis2006}  
	showed that under regularity conditions, any consistent and asymptotically normal  estimator $\mathbf{\hat \beta}$ of $\mathbf{\beta^*}$ using the observed data solves  AIPW estimating equations of the form  
	\begin{equation} \label{Tsiatis}
		\displaystyle\sum_{i=1}^{N} \bigg( \frac{C_{i,M}}{\pi_{i,M}}U_i(\mathbf{\hat \beta}, \bar L_{iM}) + \displaystyle\sum_{j=1}^{M-1} \bigg(\frac{C_{ij} - \lambda_{i,j+1} R_{ij}}{\pi_{i,j+1}}\bigg) H^{j}(\mathbf{\hat \beta}, \bar L_{ij}) \bigg) = 0
	\end{equation}
	where $H^{j}$  is an arbitrary function.  
	The choice 
	\begin{equation} \label{optimal}
		H^{j}(  \mathbf{\beta}, \bar L_{ij})=E(U( \mathbf{\beta}, \bar L_{iM})|\bar L_{ij}, R_{ij}=1)
	\end{equation}
	yields the estimator with the smallest variance; thus (\ref{Tsiatis}) and (\ref{optimal}) give the optimal observed data estimating equations.  
	More generally, when $\lambda_{ij}$ (and thus $\pi_{ij}$) is estimated by maximum likelihood, and the $\hat H^{j}$ in equation (\ref{optimal}) are estimated by  corresponding models, the solution $\mathbf{\hat \beta^{AIPW}}$ to the optimal estimating equations has the following properties  \citep{Tsiatis2006} :

	\begin{enumerate} 
		\item $\mathbf{\hat \beta^{AIPW}}$ is consistent for $\mathbf{\beta^*}$ and asymptotically normal  if either the dropout models $\hat \lambda_j$  or the imputation models $\hat H^{j}$ are correctly specified and thus consistent for the true conditional expectations they aim to estimate.
		\item If both sets of models are correctly specified, $\mathbf{\hat \beta^{AIPW}}$ has the  smallest asymptotic variance among all doubly-robust estimators of $\mathbf{\beta}$.  
		%
		%
		\item Improved, but more complex, doubly-robust estimators have been proposed, which also attain the minimum asymptotic variance when the imputation models are misspecified but the dropout models are correctly specified and $\hat \lambda$ is an effcient estimator \citep{Tsiatis2011}.  We do not consider these estimators here.
		
	\end{enumerate}
	\subsection{Seaman and Copas' doubly-robust AIPW estimator for longitudinal GEE} 
	For the case where $U_i(\mathbf{\beta})$ is a GEE of the form (\ref{GEE}),     \cite{Seaman2009} proposed a two-step procedure: first obtain estimates of the $\hat H ^{j}$, and then substitute into  the corresponding AIPW estimating equations (\ref{Tsiatis}). In particular, for subject $i$ with $j \leq J_i$,  take
	\begin{equation} \label{DRGEE}
		\hat H^{j}(\mathbf{\hat \beta}, \bar L_{ij}) = \frac{\partial \mathbf{\upmu}^T_i}{\partial \mathbf{\hat \beta}} V^{-1}_i (\mathbf{\hat Y^j_i} - \mathbf{\upmu}_i)
	\end{equation}
	where $\mathbf{\hat Y^j_i}$ is imputed from Paik's sequential mean imputation as in Algorithm \ref{completed_data}.  For $j >J_i$, $H^{j}$ can be taken to be 0, as the weights for $H^{j}$ in 
	(\ref{Tsiatis}) are 0.  Then, the Newton-Raphson algorithm is  used to solve equations (\ref{Tsiatis}) to obtain  $\mathbf{\mathbf{\hat \beta}}$.

	\subsection{Bang and Robins'  doubly-robust recursive regression estimator for $E(Y_{iM})$, and extensions}\label{bang_robbins}  
	\cite{Bang2005} introduced a regression-based approach for longitudinal data with dropout in the particular case where the estimand of interest is $E(Y_{iM}).$  
	The estimator, given in Algorithm \ref{BRobins}, uses recursive regression to impute the values of $Y_{iM}$, including $\hat \pi_{ij}^{-1}$ as a covariate in each imputation model in order to achieve double robustness.
	The algorithm differs from Paik's sequential mean imputation in that it uses imputed values $\hat Y_{iM} $ as the outcome in each estimation step, rather than as predictors, and uses $\hat Y_{iM} $ even when observed values $Y_{iM}$ are available.
	The estimator is shown to be asymptotically equivalent to the doubly-robust AIPW estimator  from  equation (\ref{Tsiatis}) in the cross-sectional setting \citep{Bang2005}. Although \cite{Bang2005} only provided the case of $U_i = Y_{iM} - E(Y_{iM})$, they noted that this approach can be generalized by modifying the $U_i$ and its corresponding score equation.
	\begin{algorithm}
		\KwResult{Bang and Robins doubly-robust estimator $\hat \mu^{BR}_{iM}$ for $E(Y_{iM})$ where $U_i(\bar L_{iM}) = Y_{iM} - E(Y_{iM})$}
		\textbf{Preliminary step:} Estimate $\hat \pi_{i2},...,\hat \pi_{iM}$ by maximum likelihood.
		
		\textbf{Initialize:} Identify all subjects $i$ with $R_{iM} = 1.$ Use these data to regress the observed values of $Y_{iM}$ on $\bar{L}_{i,M-1}$ and $\hat \pi_{iM}$, to obtain a consistently estimated  model $\tilde m^{M-1}_{M}(\bar L_{i,M-1}, \hat \pi_{iM}).$
		For all subjects with $R_{i,M-1} = 1$ let
		\begin{equation*}
			\tilde Y^{M-1}_{i,M} = \tilde m^{M-1}_{iM}(\bar L_{i,M-1}, \hat \pi_{i,M}).
		\end{equation*}
		
		\textbf{For $\mathbf{s = M-2,...,1}$ recursively:}\\
		Identify all subjects $i$ with $R_{i,s+1} = 1$ and regress the values of $\tilde Y^{M-1}_{i,M}$ on $\bar{L}_{is}$ and $\hat \pi_{i,s+1}$, to obtain a consistently estimated model $\tilde m^{s}_{iM}(\bar L_{is}, \hat \pi_{i,s+1})$.  For all subjects with $R_{is} = 1$ let
		\begin{equation*}
			\tilde Y^{s}_{i,M} = \tilde m^{s}_{iM}(\bar L_{i,s}, \hat \pi_{i,s+1}).
		\end{equation*}
		
		\textbf{Final step:}
		Let $ \hat Y^{BR}_{iM} = \tilde Y^{1}_{iM}.$  Substitue  $ \hat Y^{BR}_{iM}$ into the full data estimating functions $U$ and solve to obtain $\hat \mu^{BR}_{iM} = (1/N)\sum_{i=1}^N \hat Y^{BR}_{iM}.$
		\caption{Bang and Robins doubly-robust estimator $\hat \mu^{BR}_{iM}$ for $E(Y_{iM})$}
		\label{BRobins}
	\end{algorithm}
	
	\cite{Schnitzer2016} extended the Bang and Robins approach  to estimate the coefficients $\beta$ giving the association between $Y_{iM}$ and  baseline covariates $X_{i0}$ of interest,   in the context of a generalized linear model. The idea is to take $U_i = X_{i0}(Y_{iM} - g( \mathbf{\beta} X_{i0})),$ with $g$ an appropriate link function and where  $X_{i0}$ includes an intercept.  The model  for $\tilde m^s (\cdot)$ is also taken to use the same link function $g( \cdot)$. Then the approach is similar to  Algorithm \ref{BRobins}, however rather than including $\hat \pi_{i,s+1}$ as a predictor in the model for $\tilde m^s$ , one  includes $\hat \pi_{i,s+1}^{-1} X_{i0}$. Finally, substitute $\tilde Y^{1}_{iM}$   into the estimating functions $U_i$, and solve to obtain the estimate of $\mathbf{\hat \beta}$.  
	For the special case $U_i(\bar L_{iM}) = Y_{iM} - E(Y_{iM}),$ this recovers the 
	\cite{Bang2005} estimator.  
	
	We note here that, when applying the Bang and Robins approach for longitudinal data to estimate $E(Y_{iM})$ , 
	\cite{Tsiatis2011} used forward selection to select variables in both the logistic regression models for estimating $\lambda_j$ and in the OLS models for estimating $E(Y_{iM}|\bar L_{ij})$.  Similarly, in their implementation Schnitzer and colleagues performed a variable selection step in each regression.  In our simulations we incorporate a similar variable selection step for these estimators, as it greatly improved their performance  in practice.  We denote this slightly modified and extended estimator {\bf BR*}. Also note that the estimator requires the sequential regression models $\tilde m^s$ in Algorithm \ref{BRobins} to be consistent with the estimating equation $U$ that defines the estimand of interest; thus a change of estimating equation may require a new set of sequential regressions.  This seems to be a key difference from the imputation framework proposed below.

	\section{An imputation framework for longitudinal data using AIPW estimators}
	
	Here we show that the optimal AIPW estimating equations (\ref{Tsiatis})  can be written in a form that applies the full data estimating functions $U(\mathbf{\beta}, \bar L_{iM})$ to the 'completed' data, in which all levels of missing observations have been filled in using a set of doubly-robust imputed observations. In this approach $U(\mathbf{\beta}, \bar L_{iM})$ may be any equation of interest linear in $\mathbf{Y}_i$, including   a GEE.  The approach has the advantage that standard software can be used to solve for the estimates  $\mathbf{\hat\beta^{AIPW}},$ and it provides a framework for  flexible construction of AIPW estimators. In particular, unlike other doubly-robust approaches, when the parameters of interest change there is no need to run the  doubly-robust estimating procedure again.
	
	\subsection{The optimal AIPW estimating equations in imputation form}
	Consider the estimating functions $U(\mathbf{\beta}, \bar L_{iM}) = q(\mathbf{\beta})(\mathbf{Y}_i - \mathbf{\upmu}_i)$, where $\mathbf{Y}_i = (y_{i,1}, y_{i,2},..., y_{i,M})^T$ and $\mathbf{\upmu}_i = (\mu_{i,1}, \mu_{i,2},..., \mu_{i,M})^T$; for example  $q(\mathbf{\beta}) = \frac{\partial \mathbf{\upmu}_i^T}{\partial \mathbf{\beta}} V_i^{-1}$ with $\mu(\mathbf{\beta}) = g(\mathbf{\beta} X)$ when $U(\mathbf{\beta}, \bar L_{iM})$ is a GEE. Then one can show that the optimal AIPW estimator given by equations (\ref{Tsiatis}) and (\ref{optimal})  has an equivalent  substitution form:
	\begin{equation}\label{UDR}
		\displaystyle\sum_{i=1}^{N}U(\mathbf{\hat \beta}, \bar L^{AIPW}_{i,M}) =  
		\displaystyle\sum_{i=1}^{N} q(\mathbf{\hat \beta}) (\mathbf{\hat Y}^{AIPW}_i - \mathbf{\upmu}_i) = 0
	\end{equation}
	where $\mathbf{\hat Y}^{AIPW}_i $ is a corresponding AIPW estimator of the full data $\mathbf{Y}_i$, given in (\ref{hatY}) below.  Specific examples of such AIPW imputation estimators are given in Section \ref{sec6}.

	To demonstrate (\ref{UDR}), first, it is straightforward to show that 
	\begin{eqnarray}\label{sec5.1}
		\frac{C_{i,M}}{\pi_{i,M}} + \displaystyle\sum_{j=1}^{M-1} \bigg(\frac{C_{ij} - \lambda_{i,j+1} R_{ij}}{\pi_{i,j+1}}\bigg) = 1.
	\end{eqnarray}
	Thus, letting $ \pi_{i,M+1}=\pi_{iM}$ and $\lambda_{i1}=\lambda_{i,M+1}=0$, equation (\ref{Tsiatis}) can be rewritten as
	\begin{equation} \label{Tsiatis2}
		\displaystyle\sum_{i=1}^{N} \displaystyle\sum_{j=1}^{M} \bigg(\frac{C_{ij} -   \lambda_{i,j+1} R_{ij}}{ \pi_{i,j+1}}\bigg)   H^{j}(\mathbf{\beta}, \bar L_{ij}) =0 .
	\end{equation}

	Next,  following 
	\cite{Seaman2009} (for the case where $q(\mathbf{\beta}) = \frac{\partial \mathbf{\upmu}_i^T}{\partial \mathbf{\beta}} V_i^{-1}$), we have   that (\ref{optimal}) can be written as
	\begin{equation*} \label{Hjefficient}
		H^{j}(\mathbf{\beta}, \bar L_{ij}) = q(\mathbf{\beta})(E[\mathbf{Y}_i |\bar{L}_{ij}]- \mathbf{\upmu}_i).
	\end{equation*}
	Then substituting into (\ref{Tsiatis2}), switching the order of summation, and recognizing that 
	$\sum_{j=1}^{M}  (C_{ij} - \hat \lambda_{j+1} R_{ij})/{\hat \pi_{j+1}} =1  $,
	we obtain that (\ref{Tsiatis}) can be written as:
	\begin{align}\label{efficientDR}
		q(\mathbf{\beta}) \bigg\{\big( \displaystyle\sum_{j=1}^{M} (\frac{C_{ij} - \lambda_{i,j+1} R_{ij}}{\pi_{i,j+1}})E[\mathbf{Y}_i |\bar{L}_{ij}, R_{ij} = 1] \big) - \mathbf{\upmu}_i \bigg\}.
	\end{align}
	Finally, we can write
	\begin{align}\label{hatY}
		\mathbf{\hat Y}^{AIPW}_i=  \displaystyle\sum_{j=1}^{M} (\frac{C_{ij} - \hat \lambda_{i,j+1} R_{ij}}{\hat \pi_{i,j+1}}) E[\mathbf{\hat Y}_i |\bar{L}_j, R_j = 1] 
	\end{align}
	to obtain (\ref{UDR}), recognizing $\mathbf{\hat Y}^{AIPW}_i $ as the representation of an optimal AIPW estimator of $\mathbf{\beta} = E[\mathbf{Y}_i].$  
	
	Form (\ref{UDR}) has the advantage over the original approach in \cite{Seaman2009}  that, once the values of $\mathbf{\hat Y}^{AIPW}_i $ are obtained,  standard software for the estimating functions $U(\mathbf{\beta}, \bar L_{iM})$ can be used to solve for the doubly-robust estimator $\mathbf{\beta^*}$.  Furthermore, the form of $\mathbf{\hat Y}^{AIPW}_i$ is independent of the details of the estimating functions $U$, an independence that we can exploit to address additional estimands defined by different estimating functions, using the same set of values $\mathbf{\hat Y}^{AIPW}_i .$  From the theory in 
	\cite{Tsiatis2006}, we can be assured that $\mathbf{\beta^*}$  is a member of the class of locally efficient DR estimators.  
	We may also use the above considerations to provide a direct  demonstration that a general estimator $\mathbf{\hat \beta^{AIPW}}$  is doubly-robust. A formal development of this proof can be found in Appendix \ref{secApp1}.
	
	%
	%
	
	\section{Imputation framework with two specific DR estimators}\label{sec6} 
	Here we apply the AIPW imputation framework to construct two particular DR estimators for longitudinal data with monotone dropout.

	\subsection{AIPW-I: sequential mean imputation}\label{AIPW} 
	We denote the sequential mean imputation  implementation of the DR imputation estimator as  $\mathbf{\hat \beta^{AIPW-I}}.$ 
	\vspace{1em}
	The procedure can be described in two steps: 
	\vspace{1em}

	\begin{description}
		\item[\em Imputation step:] For each subject $i$, impute a doubly-robust complete data vector using the AIPW estimate of $Y_{ik}$, $k= 2,\ldots M$
		\begin{equation} \label{imp_AIPW}
			\hat{Y}_{ik}^{AIPW}=\frac{C_{ik} Y_{ik}}{\hat \pi_{ik}} + \displaystyle\sum_{j=1}^{k-1} (\frac{C_{ij} - \hat \lambda_{ij+1} R_{ij}}{\hat \pi_{ij+1}}) \hat{m}_{k}^{j} (\bar L_{ij}) 
		\end{equation}
		with the models $\hat m^j_k(\cdot) = E[Y_{ik} | \bar L_{ij}, R_{ij}=1]$ estimated by Paik's sequential mean regression as in Algorithm \ref{completed_data}.
		\vspace{1em}
		\item[\em Estimation step:] Substitute $\hat{\mathbf{Y}}_{i}^{AIPW}$  into the full data estimating equations. These equations are  then  solved using standard software to obtain the doubly-robust estimator $\mathbf{\hat \beta^{AIPW-I}}.$
	\end{description}

	\subsection{AIPW-S:  a computationally simpler baseline $\times$  time imputation model}\label{AIPW-S}
	Our second implementation is computationally simpler,  denoted as $\mathbf{\hat \beta^{AIPW-S}}.$  It uses only baseline covariates and time for the imputation models, at the cost of potential efficiency loss and  stronger required assumptions. This is motivated by the clinical trials setting, in which it may be desirable to restrict attention to modeling with pre-randomization covariates.  For the imputation models, it uses up to $M-1$ models with baseline covariates and time, rather than the  $M(M-1)/2$ sequential recursive regression models required by 
	$\mathbf{\hat \beta^{AIPW-I}}$ or the Bang and Robins estimator.  It can be thought of as a simplified version of the fully efficient approach given in 
	\cite{Tsiatis2011}.  If the covariates are sufficient to render the data MAR, then $\mathbf{\hat \beta^{AIPW-S}}$  will be doubly-robust. The cost is potential efficiency loss, and a stronger assumption regarding the MAR conditions. In the simulations and application, we give an example of the simplest implementation, in which only one single MMRM is estimated rather than $M-1$ models, or $M(M-1)/2$ models as in the AIPW-I and related estimators.
	
	Here we give a sketch of the development; please refer to the Appendix \ref{secapp2} for details. To develop the estimator, we start with (\ref{efficientDR}), and note that $\hat E[Y_{ik} |\bar L_{ij}] = Y_{ik}$ for $j \geq k$. Hence we can write the $k^{th}$ component of $\mathbf{\hat Y}_i^{AIPW}$ as
	\begin{align}\label{YAIPWS}
		\hat Y^{AIPW}_{ik} &=
		\displaystyle\sum_{j=k}^{M} (\frac{C_{ij} - \hat \lambda_{i,j+1} R_{ij}}{\hat \pi_{i,j+1}}) Y_{ik}
		+
		\displaystyle\sum_{j=1}^{k-1} (\frac{C_{ij} - \hat \lambda_{i,j+1} R_{it}}{\hat \pi_{i,j+1}}) \hat Y_{ik}^{j}.
	\end{align}
	
	Next, we take $\hat Y^{j}_{ik}$ to be a consistent estimate of $E[Y_{ik}|X_{i0}, t]$ independent of $j$, where $X_{i0}$ contains baseline covariates and $t$ indicates time. This amounts to the choice $H^{j}(  \mathbf{\beta}, \bar L_{ij})=E[Y_{ik}|X_{i0}, t]$.  Note that, while we are now outside the set of possible efficient estimators given by   equation (\ref{Tsiatis}), except in the special case where $E[Y_{ik}|\bar L_{ij}] = E[Y_{ik}|X_{i0}, t]$, the arguments for double robustness remain unchanged. (Please see Appendix  \ref{secapp2} for details.) %

	In this case, (\ref{YAIPWS}) simplifies to a repeated cross-sectional form:
	\begin{equation}\label{BL}
		\hat Y^{AIPW-S}_{ik}  = \frac{R_{ik}}{\hat \pi_{ik}} Y_{ik} +
		(1-\frac{R_{ik}}{\hat \pi_{ik}}) \hat Y_{ik},
	\end{equation}
	because $ \sum_{j=k}^{M} ((C_{ij} - \hat \lambda_{i,j+1} R_{ij})/\hat \pi_{i,j+1}) = R_{ik}/\hat \pi_{ik}$.

	\vspace{1em}The procedure can be described in two steps:
	\begin{description}
		\item[\em Imputation step:] For each subject $i$, impute a doubly-robust complete data vector using the AIPW-S estimate of $Y_{ik}$, $k= 2,\ldots M$ given by (\ref{BL}), with  $\hat Y_{ik} = \hat m_k(X_{i0}, t)$ depending only on baseline covariates $X_{i0}$ and time $t,$ and where $\ m_k(X_{i0}, t)= E[Y_{ik}|X_{i0}, t]$ is estimated from the observed data.
		\vspace{1em}
		\item[\em Estimation step:] Substitute $\hat{\mathbf{Y}}_{i}^{AIPW-S}$  into the full data estimating equations. These equations are  then  solved using standard software to obtain the doubly-robust estimator $\mathbf{\hat \beta^{AIPW-S}}.$
	\end{description}

	A detailed development can be found in Appendix \ref{secapp2}. In the simulation and applications below, we give  examples where  a single mixed-effects model $\hat m(X_0,t)$  is estimated using all observed responses as the outcomes, regressed on baseline covariates $X_0$ and time $t$.  Then $ \hat Y_{ik} = \hat m(X_{i0},t)$, and  formula (\ref{BL}) is used to derive a completed doubly-robust data set. In this case, instead of $M(M-1)/2$ regressions as for the other sequential regression estimators, only one regression needs to be estimated.

	
	\section{Simulations}\label{sec:Simulation}
	We use simulation to investigate the performance of the two  DR imputation-framework estimators  in the setting of normal longitudinal generalized estimating equations under MAR monotone dropout:  {\bf  AIPW-I,} based on AIPW using Paik's sequential mean imputation (section \ref{AIPW}); and  {\bf AIPW-S}, AIPW with a computationally simpler imputation model using only baseline covariates (section \ref{AIPW-S}).  We also compare the use of the Bang and Robins regression-based estimator {\bf BR*} in both its original form and as extended by  \cite{Schnitzer2016} to more general estimating equations for generalized linear models. 	
	
	We study two different estimands, in two sets of simulations: 
	first, $E[Y_{k}]$ for the case $k=1, \ldots M$, in which case {\bf BR*} reduces to the original Bang and Robins estimator,  and  second, the vector of regression coefficients $\mathbf{\beta}$ defined as the solution (using the full-data distribution) to the score equations $U$ given by a mixed model with repeated measures (MMRM), similar to what might be used in the primary analysis of a clinical trial.  
	In particular, $\mathbf{\beta}$ includes a time by treatment interaction,  which represents an important estimand in the clinical trials setting.  
	
	For comparison, we also include two traditional observed-data regression estimators often used in clinical trials: (1) a mixed model of repeated measures ({\bf MMRM}) with random intercept and slope, and (2) a generalized estimating equations model using an independence working correlation ({\bf GEE-IND)} with a sandwich estimator of variance. As negative controls which adjust for dropout but are not DR, we also include Paik's sequential mean imputation ({\bf Paik}), as described in section \ref{Paik_algorithm}, and  inverse probability weighted GEE ({\bf WGEE}) as implemented in the R package wgeesel \citep{wgee}.  Finally, as a further negative control in the imputation framework, we use  data imputed using {\bf BR*}  in the first simulation, for estimating  a time by treatment interaction in the second simulation.  As  {\bf BR*} is not designed to work as an imputation estimator, under model misspecification this may  demonstrate  the utility of the new estimators in practice, in a special case.  Notably, we allow the imputation model to differ from the estimation model, in order to show the utility of the imputation framework.
	
	The two sets of simulations are linked for the imputation estimators.  We used the estimates of  $E[Y_{k}]$ from the first set of simulations to obtain a completed dataset.  We then used a standard GEE model with the completed data to estimate the regression coefficients $\beta$ in the second set of simulations.	 	
	
	For each simulated data set and estimation method, we compute both a point estimate for the estimand of interest and an associated 95\% normal-theory confidence interval, where the variance of the parameter is computed using the nonparametric bootstrap.
	We consider both a moderate dropout construct  ($\approx 30\%$ dropout) and an extreme  dropout construct  ($\approx 50$\% dropout). 
	To investigate double robustness, we construct four scenarios, depending on whether the dropout models and/or the imputation models are specified correctly or incorrectly.  
	
	Importantly, we  include in the study an extreme form of model misspecification as described in 7.2 and 7.3, in which the incorrect dropout models or imputation models do not include the treatment indicator, and the estimand is the coefficient of a treatment by time interaction.  This case allows the estimand analysis model to differ substantially from the imputation model, and illustrates what might happen when using a pre-existing imputed complete dataset to study a different estimand of interest.
	
	\subsection{Performance metrics and sample sizes}
	We report Monte Carlo estimates (500 repeats) for the bias, standard deviation, and root mean square error (RMSE) of the point estimate. For the confidence interval, we report
	the coverage probability, the mean of the bootstrap standard error estimate, and the mean interval score  of 
	\cite{Gneiting2007}, given by:
	\begin{equation} \label{score}
		S(\hat l , \hat u, \theta) = (\hat u -\hat l) + \frac{2}{\alpha} \left (  (\hat l- \theta ) \mathds{1} \{   \theta < \hat l \} +(\theta - \hat u) \mathds{1} \{  \hat u\  <\theta \} \right ),
	\end{equation}
	where $\theta$ is the true parameter of interest, $1-\alpha$ is the nominal confidence level, $(\hat l, \hat u)$ are the interval limits. and $\mathds{1}\{\}$ denotes the indicator function. A MMRM model with correct covariates and random intercept and slope will serve as the gold standard in our comparisons; this reflects the complete data generating model.


	\subsection{Specification of the data generating model, the primary estimand, and the correct and incorrect imputation  models}
	Longitudinal responses $Y_{ij}$ are generated from  a mixed-effects model, using similar parameters as in 
	\cite{Tsiatis2011}, as
	\begin{eqnarray}\label{regression}
		Y_{ij} = & b_{0i} + b_{1i} t + \beta_0 + \beta_1 x_1 + \beta_2 x_2 + \beta_3 x_2 \times t + \epsilon.
	\end{eqnarray}
	The covariates are generated as $x_1 \sim$ N(5,1),  $x_2 \sim$   Bernoulli$(0.5)$, with $\epsilon \sim$ N(0, 1). Considering the  clinical trial setting, $x_2$ would indicate the treatment variable, and $x_1$ a continuous covariate.  The sample size is $n=500$, and three time points are $t={1,2,3}$.  The bootstrap sample size (for variance estimation) is 300. 
	Here,  $b_{0}$ and $b_1$ are random intercepts and  slopes  from a bivariate normal with mean 
	$\mu = \big(\begin{smallmatrix}
		1\\
		6
	\end{smallmatrix}\big)$ 
	and covariance 
	$\Sigma = \big(\begin{smallmatrix}
		0.3 & 0.1\\
		0.1 & 0.2
	\end{smallmatrix}\big)$.  The data generating coefficients are $\beta_0 = 0.5$, $\beta_1 = 2$, $\beta_2 = -0.25$, $\beta_3 = -6$.  Thus the  expectation of $Y_{ij}$ is $11.375, 14.375, 17.375$  for $j= 1,2,3$. 
	
	Only a few studies have performed longitudinal simulations to evaluate AIPW estimators. Comparing with  
	\cite{Tsiatis2011} and others, we further include an arm by time interaction, providing a more realistic model for randomized trials. The primary estimand is either $E[Y_3]$ or the vector of regression coefficients $\mathbf{\beta}_{est} = (E(b_{1i}), \beta_2, \beta_3)$ obtained from the estimating equations for model (\ref{regression}) , referring to the slope of time, difference between arms at baseline, and effect of treatment along with time.
	
	For all methods, the correctly specified mean model includes categorical time (if needed),  $(x_1, x_2),$ and the interaction of time by $x_2$.  For sequential imputation methods, the correct  models are of the form $Y_j \sim Y_1 + ... + Y_{j-1} + x_1 + x_2 + e$ for $j=(2,3)$.  For all methods, the misspecified imputation model   excludes the treatment indicator $x_2$ and the corresponding interaction terms. Note that in our setting this is a case of severe model mis-specification, as the difference between treatment arms is of primary interest. From another perspective, there is a severe mis-match between the imputation models $m,$ which omit the variable of primary interest, and  the estimating equations $U$ which define the estimand.  Thus, this may also be thought of as  a test of the robustness of the imputation framework to changing the estimand of interest after the imputation step has been carried out. Details of the model specifications can be found in the Appendix \ref{apend3}.
	\subsection{Specification of the dropout generating model and of correct and incorrect  models for missingness}
	
	We generate dropout according to the logistic regression models   
	\begin{align*} 
		logit(\lambda_{2}) & = \gamma_{20} + \gamma_{21} y_1 - \gamma_{22} x_2 + e\\ 
		logit(\lambda_{3}) & =  \gamma_{30}+ \gamma_{31}  y_1 + \gamma_{32}y_2 - \gamma_{33}  x_2 + e,
	\end{align*}
	where
	$\lambda_{ij}$ is the hazard function of dropout as defined in formula (\ref{HZfunc}). 
	For the moderate dropout construct, 
	$(\gamma_{20}, \gamma_{21}, \gamma_{22}, \gamma_{30}, \gamma_{31}, \gamma_{32}, \gamma_{33} )  = (-7.625, 0.5, 2, -5.225, 0.1, 0.2, 4)$. 
	The empirical mean dropout rates from 500  Monte Carlo repeats were 10\% for $Y_2$ and 30\% for $Y_3$.  For the high dropout construct, $(\gamma_{20}, \gamma_{21}, \gamma_{22}, \gamma_{30}, \gamma_{31}, \gamma_{32}, \gamma_{33} )  = (-7, 0.5, 1, -4.5, 0.1, 0.2, 2)$ and the mean dropout rates were 20\% and 48\% for $Y_2$ and $Y_3,$ respectively.
	Misspecified dropout models omit the treatment indicator $x_2$, which again omits the difference between treatment arms in the clinical trials setting. 
	
	\subsection{Specification of the {BR*} estimator, used as an imputation estimator} \label{BR}
	In the case when the estimand of interest is the coefficient of  a baseline covariate $X$, and where the imputation models are consistent with the estimating equations,  the { BR*} estimator, given in section (\ref{bang_robbins}), is similar in form to an imputation framework estimator.    Hence we  investigated empirically the use of BR* in the imputation step (\ref{imp_AIPW}).  This can be done by setting $M$ in Algorithm \ref{BRobins} sequentially from $M= 2$ to the final visit, and then substituting each $\hat Y_{ik}^{AIPW}$ in (\ref{imp_AIPW})  by $\hat \mu^{BR}_{ik}.$ 
	Note that, in the case of estimating the coefficient of a treatment by time interaction term,  we are outside of the category of estimands considered by \cite{Bang2005} and \cite{Schnitzer2016}.  Also, the imputation framework allows the model in the imputation step to differ from the estimating equations model, and so in this case is again outside the class of estimators considered in those papers.  In our simulations, we  used forward variable selection in the imputation models, following the implementations in both \cite{Tsiatis2011} and \cite{Schnitzer2016}, as this in practice improved performance.

	\subsubsection{Results for estimating $E(Y_3),$ under moderate dropout}
	
	The upper left panel of Table \ref{table1 Moderate Scenario} shows results for estimating $E(Y_3)$ under the moderate dropout construct when both imputation and dropout models are correct. The "gold standard" MMRM model (correct complete data maximum likelihood estimator, with MAR data) has a bias of -0.01. GEE-IND has a worse bias of -0.09, which supports the theory that even if the mean structure is correct, an incorrect working correlation may still cause bias under MAR longitudinal dropout.  The bias of all three doubly-robust methods is less than 0.01, and the performance is very similar to the gold standard, consistent with their asymptotic local efficiency. Paik's imputation and WGEE also perform well in terms of bias and efficiency. 
	However, across all moderate dropout scenarios, approximately 1\% of  Monte Carlo repeats for WGEE reported convergence issues, sometimes resulting in substantial standard errors and outlier estimates. The program also consumes much more time than other methods, by a factor of about 5.   
	
	When the imputation model is correct but the dropout model is misspecified (upper right panel), all doubly-robust methods have acceptable bias, ranging from 0.037 to -0.005, with RMSEs, coverage probabilities,  interval scores, Monte Carlo standard deviations, and average estimated standard errors all similar to the gold standard, indicating the estimators are consistent and remain close to efficient. 
	
	When the dropout model is correct but the imputation model is misspecified (bottom left panel), all doubly-robust methods again have acceptable bias, indicating their double robustness. AIPW-I has the best efficiency, and AIPW-S loses some efficiency compared to prior scenarios. 
	In this scenario, the original BR estimator without variable selection had an unacceptably large bias and low coverage probability (85\%, not shown), although the modified estimator BR* (from 
	\cite{Tsiatis2011}) performs well, as shown in the table \ref{table1 Moderate Scenario}.
	As expected, the non-doubly-robust regression-based methods did not work well. The incorrect MMRM model and Paik's imputation have coverage probabilities less than 50\%, with large bias and bad efficiencies. A GEE model with wrong mean structure and wrong working correlation matrix performed the worst.
	
	The bottom right panel shows results when both models are incorrect. All methods have worse performance than in other scenarios, indicating that the misspecification in the models is substantial, and thus provides a good test of double robustness. In this scenario, all methods report similar bias and efficiencies except GEE-IND, which reports a much worse result.
	
	\begin{table*}[!t]
		\caption{Comparison by simulation of point and interval estimators for $E(Y_3),$ under a moderate dropout rate of 30\% missing.    Performance metrics are: Bias, Root mean square error (RMSE), the interval score (IntS), coverage probability (CovP), Monte Carlo standard deviation (MCSD) and mean bootstrap estimate of standard error (AveSE), from 500 simulation runs.  The sample size was n=500, and the bootstrap sample size was 300. The true value of $E(Y_3)= 17.4$. 
		}
		\label{table1 Moderate Scenario} 
		\renewcommand{\arraystretch}{1.1}
		\resizebox{\textwidth}{!}{
			\begin{tabular}{l|rrrrrrrrrrrrr}
				\hline
				\hline
				\\
				& Bias & RMSE & IntS & CovP & MCSD & AveSE & & Bias & RMSE & IntS & CovP & MCSD & AveSE \\ 
				\hline
				& \multicolumn{6}{c}{\underline{Y correct P correct}} & & \multicolumn{6}{c}{\underline{Y correct P incorrect}} \\
				BR* & -0.01 & 0.30 & 1.37 & 0.95 & 0.30 & 0.30 &  & -0.01 & 0.30 & 1.37 & 0.95 & 0.30 & 0.30 \\ 
				AIPW-I & -0.01 & 0.30 & 1.39 & 0.95 & 0.30 & 0.31 &  & -0.00 & 0.30 & 1.37 & 0.95 & 0.30 & 0.30 \\ 
				AIPW-S & -0.01 & 0.31 & 1.39 & 0.95 & 0.31 & 0.31 &  & 0.04 & 0.31 & 1.39 & 0.95 & 0.31 & 0.31 \\ 
				Paik & -0.01 & 0.30 & 1.35 & 0.95 & 0.30 & 0.30 &  & -0.01 & 0.30 & 1.35 & 0.95 & 0.30 & 0.30 \\ 
				MMRM & -0.01 & 0.30 & 1.38 & 0.95 & 0.30 & 0.30 &  & -0.01 & 0.30 & 1.38 & 0.95 & 0.30 & 0.30 \\
				WGEE & 0.00 & 0.30 & 1.37 & 0.95 & 0.30 & 0.30 &  & 0.03 & 0.31 & 1.39 & 0.95 & 0.31 & 0.31 \\ 
				GEE-IND & -0.09 & 0.31 & 1.40 & 0.94 & 0.29 & 0.30 &  & -0.09 & 0.31 & 1.40 & 0.94 & 0.29 & 0.30 \\ 
				\hline
				& \multicolumn{6}{c}{\underline{Y incorrect P correct}} & & \multicolumn{6}{c}{\underline{Y incorrect P incorrect}} \\
				BR* & -0.00 & 0.33 & 1.54 & 0.95 & 0.33 & 0.34 &  & -0.64 & 0.71 & 6.06 & 0.53 & 0.31 & 0.33 \\ 
				AIPW-I & -0.01 & 0.31 & 1.44 & 0.95 & 0.31 & 0.31 &  & -0.68 & 0.74 & 7.65 & 0.42 & 0.31 & 0.31 \\ 
				AIPW-S & -0.04 & 0.38 & 1.72 & 0.95 & 0.38 & 0.36 &  & -0.62 & 0.71 & 6.44 & 0.55 & 0.36 & 0.36 \\ 
				Paik & -0.65 & 0.72 & 6.95 & 0.45 & 0.31 & 0.32 &  & -0.65 & 0.72 & 6.95 & 0.45 & 0.31 & 0.32 \\ 
				MMRM & -0.60 & 0.67 & 5.83 & 0.50 & 0.31 & 0.32 &  & -0.60 & 0.67 & 5.83 & 0.50 & 0.31 & 0.32 \\
				WGEE & 0.00 & 0.32 & 1.44 & 0.94 & 0.32 & 0.32 &  & -0.62 & 0.71 & 6.11 & 0.53 & 0.34 & 0.36 \\  
				GEE-IND & -2.18 & 2.20 & 63.84 & 0.00 & 0.30 & 0.31 &  & -2.18 & 2.20 & 63.94 & 0.00 & 0.30 & 0.31 \\ 
				\hline
				\hline
				
		\end{tabular}}
		\\
		{\tiny

		}
	\end{table*}
	
	\begin{table*}[!b]
		\caption{Comparison by simulation of point and interval estimators for regression coefficients $\mathbf{\beta}$,   under a moderate dropout rate of 30\% missing.    Performance metrics are: Bias, Root mean square error (RMSE), the interval score (IntS), and coverage probability (CovP) for a 95\% confidence interval, from 500 simulation runs.  The sample size was n=500, and the bootstrap sample size was 300. The true values of the parameters $\mathbf{\beta}$ are 2, -0.25, and 6 for the coefficients of $x_2$ (treatment indicator), time, and the $x_2$:time  interaction term, respectively.}
		\label{table2 Moderate Scenario} 
		\renewcommand{\arraystretch}{1.1}
		\centering
		\resizebox{\textwidth}{!}{%
			\begin{tabular}{l|cccccccccccccc}
				\hline
				\hline
				& \multicolumn{4}{c} {coefficient of $x_2$}                         &  & \multicolumn{4}{c} {coefficient of $time$}     &  & \multicolumn{4}{c} {coefficient of $ time \times x_2$}  \\ \cline{2-5} \cline{7-10} \cline{12-15} 
				& Bias                     & RMSE & IntS  & CovP &  & Bias  & RMSE & IntS   & CovP &  & Bias  & RMSE & IntS   & CovP \\ \hline
				& \multicolumn{14}{c}{Y correct P correct}                                                                           \\
				BR*      & 0.01 & 0.13 & 2.24 & 0.98 &  & 0.00 & 0.11 & 0.50 & 0.93 &  & -0.01 & 0.14 & 0.67 & 0.94 \\
				AIPW-I  & 0.01                     & 0.10 & 0.50  & 0.95 &  & 0.00  & 0.11 & 0.50   & 0.94 &  & -0.01 & 0.14 & 0.68   & 0.93 \\
				AIPW-S  & 0.01                     & 0.10 & 0.50  & 0.95 &  & 0.00  & 0.11 & 0.50   & 0.95 &  & -0.01 & 0.14 & 0.68   & 0.93 \\
				Paik    & 0.01                     & 0.10 & 0.50  & 0.95 &  & 0.00  & 0.10 & 0.50   & 0.93 &  & -0.01 & 0.14 & 0.67   & 0.93 \\
				MMRM    & 0.01                     & 0.10 & 0.50  & 0.94 &  & 0.00  & 0.10 & 0.50   & 0.94 &  & -0.01 & 0.14 & 0.66   & 0.94 \\
				WGEE    & 0.01                     & 0.10 & 0.50  & 0.95 &  & -0.01 & 0.10 & 0.50   & 0.95 &  & 0.00  & 0.14 & 0.65   & 0.94 \\
				GEE-IND & 0.01                     & 0.10 & 0.50  & 0.95 &  & -0.03 & 0.11 & 0.51   & 0.94 &  & 0.02  & 0.14 & 0.64   & 0.95 \\ \hline
				& \multicolumn{14}{c}{Y correct P incorrect}                                                                         \\
				BR*      & 0.01 & 0.18 & 1.70 & 0.98 &  & 0.00 & 0.10 & 0.50 & 0.94 &  & -0.01 & 0.14 & 0.66 & 0.94 \\
				AIPW-I  & 0.01                     & 0.10 & 0.50  & 0.95 &  & 0.00  & 0.10 & 0.50   & 0.94 &  & -0.01 & 0.14 & 0.68   & 0.93 \\
				AIPW-S  & 0.01                     & 0.10 & 0.50  & 0.95 &  & -0.01 & 0.10 & 0.50   & 0.94 &  & 0.02  & 0.14 & 0.66   & 0.94 \\
				Paik    & 0.01                     & 0.10 & 0.50  & 0.95 &  & 0.00  & 0.10 & 0.50   & 0.93 &  & -0.01 & 0.14 & 0.67   & 0.93 \\
				MMRM    & 0.01                     & 0.10 & 0.50  & 0.94 &  & 0.00  & 0.10 & 0.50   & 0.94 &  & -0.01 & 0.14 & 0.66   & 0.94 \\
				WGEE    & 0.01                     & 0.10 & 0.51  & 0.95 &  & -0.01 & 0.10 & 0.50   & 0.95 &  & 0.02  & 0.14 & 0.65   & 0.95 \\
				GEE-IND & 0.01                     & 0.10 & 0.50  & 0.95 &  & -0.03 & 0.11 & 0.51   & 0.94 &  & 0.02  & 0.14 & 0.64   & 0.95 \\ \hline
				& \multicolumn{14}{c}{Y incorrect P correct}                                                                         \\
				BR*      & 0.00 & 0.14 & 2.79 & 0.98 &  & -0.83 & 0.89 & 20.82 & 0.04 &  & 1.65 & 1.75 & 61.60 & 0.00 \\
				AIPW-I  & 0.01                     & 0.10 & 0.50  & 0.95 &  & 0.00  & 0.11 & 0.53   & 0.95 &  & -0.01 & 0.15 & 0.70   & 0.94 \\
				AIPW-S  & 0.01                     & 0.11 & 0.51  & 0.95 &  & 0.00  & 0.11 & 0.55   & 0.95 &  & -0.01 & 0.15 & 0.71   & 0.94 \\
				Paik    & 0.01                     & 0.10 & 0.51  & 0.95 &  & -0.59 & 0.61 & 13.43  & 0.01 &  & 0.67  & 0.69 & 15.12  & 0.01 \\
				MMRM    & 0.25                     & 0.25 & 10.00 & 0.00 &  & -3.24 & 3.25 & 117.84 & 0.00 &  & 6.00  & 6.00 & 240.00 & 0.00 \\
				WGEE    & 0.25                     & 0.25 & 10.00 & 0.00 &  & -3.07 & 3.08 & 111.33 & 0.00 &  & 6.00  & 6.00 & 240.00 & 0.00 \\
				GEE-IND & 0.25                     & 0.25 & 10.00 & 0.00 &  & -3.30 & 3.31 & 120.29 & 0.00 &  & 6.00  & 6.00 & 240.00 & 0.00 \\ \hline
				& \multicolumn{14}{c}{Y incorrect P incorrect}                                                                       \\
				BR*      & \multicolumn{1}{l}{0.02} & 0.25 & 3.59 & 0.97 &  & -3.30 & 3.31 & 119.88 & 0.00 &  & 6.08 & 6.08 & 238.57 & 0.00 \\
				AIPW-I  & \multicolumn{1}{l}{0.01} & 0.11 & 0.50  & 0.95 &  & -0.29 & 0.31 & 3.38   & 0.34 &  & 0.05  & 0.15 & 0.67   & 0.95 \\
				AIPW-S  & \multicolumn{1}{l}{0.01} & 0.11 & 0.50  & 0.95 &  & -0.32 & 0.34 & 4.05   & 0.28 &  & 0.10  & 0.18 & 0.79   & 0.91 \\
				Paik    & \multicolumn{1}{l}{0.01} & 0.10 & 0.51  & 0.95 &  & -0.59 & 0.61 & 13.43  & 0.01 &  & 0.67  & 0.69 & 15.12  & 0.01 \\
				MMRM    & \multicolumn{1}{l}{0.25} & 0.25 & 10.00 & 0.00 &  & -3.24 & 3.25 & 117.84 & 0.00 &  & 6.00  & 6.00 & 240.00 & 0.00 \\
				WGEE    & \multicolumn{1}{l}{0.25} & 0.25 & 9.98  & 0.00 &  & -3.27 & 3.27 & 118.58 & 0.00 &  & 6.00  & 6.00 & 239.97 & 0.00 \\
				GEE-IND & \multicolumn{1}{l}{0.25} & 0.25 & 10.00 & 0.00 &  & -3.30 & 3.31 & 120.29 & 0.00 &  & 6.00  & 6.00 & 240.00 & 0.00 \\
				\hline 
				\hline
			\end{tabular}%
		}
	\end{table*}
	
	\subsubsection{Results for estimating coefficients $\mathbf{\beta}$, under moderate dropout}
	Table \ref{table2 Moderate Scenario} presents results for estimating
	the vector of three regression coefficients $\mathbf{\beta} = (E(b_1), \beta_2, \beta_3)$ as defined in (\ref{regression}). 
	The three regression coefficients are the coefficients of $time$, $x_2$,  and $x_2$ by $time$. In a randomized trial, these coefficients would correspond to the  time trend for the placebo arm, the baseline difference between arms, and the treatment effect on the time trend.
	
	With both models correctly specified, the estimated $\mathbf{\hat \beta}$ appears to be consistent and efficient in all  methods, except that the Bang and Robins - Schnitzer approach (BR*) reports a larger variance for the estimated $\beta_2$, the coefficient of the treatment indicator $x_2$. After excluding simulation runs with  substantial standard errors (greater than 50), BR* still has a large mean interval score and some unusual standard errors from Monte Carlo repeats. 
	
	When the imputation model is correctly specified but the dropout model incorrectly excludes the treatment indicator $x_2$, the performance of the three doubly-robust estimators is as good as the first scenario, indicating that all three doubly-robust imputed data sets have minimal bias in this scenario. While AIPW-I and AIPW-S also obtain comparable efficiencies to the gold standard, BR* again has larger RMSE and mean interval scores when estimating $\beta_2,$ the treatment arm indicator.
	
	An interesting case occurs when the dropout model is correctly specified but the imputation model omits $x_2$ and the $time \times x_2$ interaction.  First, consider the coefficient of the baseline indicator  $x_2,$ where all three estimators  AIPW-I, AIPW-S, and BR* perform as before, with minimal bias in accordance with their double-robustness.   Next, consider the coefficients of the omitted terms  $time$ and $time \times x_2.$ Here, AIPW-I and AIPW-S continue to obtain unbiased and efficient estimators for the coefficients of these terms.  This illustrates their double robustness, and that the imputation model does not need to accord with the estimation model. However in this setting  BR*,  when used as an imputation estimator, fails to consistently estimate the coefficients for $time$ and $time \times x_2$.  This is not entirely unexpected, because  the { BR*} approach is not developed for time varying predictors of interest, and also because it requires the variables  of interest in the final estimating equations to be included in the imputation models.  Thus, the BR* estimator is indeed not robust to this particular kind of model misspecification. 
	Paik's mean imputation derives an unbiased estimator for $\beta_2$, but fails to consistently estimate $\beta_3$ and $E(b_1)$. MMRM, WGEE and GEE-IND do not perform well in this scenario.  
	
	When both models are misspecified, none of the methods obtain consistent estimates for $\mathbf{\beta}$. As before, this indicates that the misspecification in the imputation and weighting models is substantial, and thus provides a good test of double robustness. Notably, in our simulation, AIPW-I and AIPW-S are more robust than other methods, especially for estimating $\mathbf{\hat \beta}$ for $x_2$ and $time$ by $x_2,$ the treatment and treatment by time interaction terms.  
	\subsection{Extreme Dropout Construct}
	We conduct a similar comparison of these estimators in the extreme dropout construct, where the mean dropout rates for $Y_2$ is 20\% and for $Y_3$ is 48\%.  Other simulation parameters are kept as previously described.  Results are qualitatively similar to the moderate dropout construct, with generally good performance of the three doubly-robust estimators compared to the  MMRM estimator when estimating $E[Y_3]$, and good performance of AIPW-I and AIPW-S when estimating $\mathbf{\beta}$ through the fully imputed data sets. In some scenarios, the loss of efficiency of AIPW-S relative to AIPW-I becomes apparent, although its performance is still acceptable. Compared with the moderate construct, WGEE has even more convergence issues reported.  Details are given in the Appendix \ref{secapp3}.

	\section{Application to the MCI trial}
	
	The MCI trial has been described in section \ref{MCI}.  The primary outcome of the trial was time to progression to Alzheimer's disease (AD).  The main conclusion of the trial was that Vitamin E had no benefit, while donepezil provided some benefit over placebo at 12 months, but not at 36 months, in accordance with the known symptomatic benefits of donepezil. The trial showed no benefit in secondary analyses comparing within-patient change on the two cognitive measures Mini-Mental State Examination (MMSE) and the Clinical Dementia Rating sum of boxes  (CDR-SOB) at 36 months.
	
	Here, for simplicity, only the donepezil arm and placebo arm are studied.  The two arms have similar demographic and clinical characteristics at baseline. The within-subject change from baseline for the MMSE and the CDR-SOB are used as our repeated measures outcomes. Lower MMSE (range: 0 to 30) and higher CDR-SOB (range: 0 to 18) indicate worse cognition. These measures are assessed at baseline, 12 months, 24 months, and 36 months. The mean values of MMSE for placebo group versus donepezil group are 27.3 (SD=1.80) versus 27.2 (SD=1.86) at baseline, 26.6 (SD=2.81) versus 27.1 (SD=2.46) at 12 months, 26.2 (SD=3.48) versus 26.4 (SD=3.14) at 24 months, and 25.0 (SD=5.05) versus 25.3 (SD=4.79) at 36 months. The mean values of CDR-SOB for placebo group versus donepezil group are 1.87 (SD=0.79) versus 1.78 (SD=0.80) at baseline, 2.28 (SD=1.45) versus 1.98 (SD=1.22) at 12 months, 2.72 (SD=2.02) versus 2.56 (SD=2.04) at 24 months, and 3.32 (SD=3.09) versus 3.20 (SD=2.66) at 36 months. The missing rates for the donepezil arm and the placebo arm are 26.1\% and 16.6\% at 12 months, 34.0\% and 29.3\% at 24 months, and 42.7\% and 32.0\% at 36 months. In order to ensure that the dropout is monotone, we use Paik's mean imputation \citep{ChoPaik1997} to fill in the intermediate missing responses. Only about 1.5\% of missing responses are intermediate missing values.
	
	We compare results of the two doubly-robust imputation estimators, AIPW-I and AIPW-S, and an estimator from a GEE model with an independence working correlation (GEE-IND). We also present unadjusted estimates of the mean at each time point, using all observed data. All models include the baseline outcome score, and model time as a linear trend, including both arm and the arm by time interaction as covariates, similar to one type of standard analysis model for AD trials.  Two primary estimands of interest are considered for each outcome. The first estimand is the mean difference between arms, placebo arm - treatment arm,  in within-patient change at the final visit (36 months), which is the primary estimand in many late stage AD trials. The least-squares mean difference, a model adjusted estimate, would be used \citep{lsmean}.  The second estimand of interest is the regression coefficient for the interaction between treatment arm and time. In addition, the model-adjusted estimates of the average difference between arms at intermediate time points (12 and 24 months), and the coefficient of time  are also presented, as these statistics would normally be computed in the final analysis of a clinical trial. The dropout model for the AIPW methods include arm, baseline outcome score and historical $Y$ as covariates.  Normal theory confidence intervals are constructed using the nonparametric bootstrap estimates of variance; the bootstrap sample size is 500.  
	
	Figure \ref{forestplot} displays the least-squares mean estimates for the mean difference between arms  in within-patient change for the MMSE (left panel) and the CDR-SOB (right panel), at 12, 24 and 36 months.  The thick transparent purple line shows the underlying mean differences at each time point in the observed data.  For the MMSE comparison, a positive difference is in favor of active treatment; for the CDR-SOB comparison, a negative difference is in favor of active treatment. Consistent with the known transient benefit of donepezil, the raw data for both measures show some benefit to the active arm at the 12 month assessment, which then diminishes towards zero at the 24 month and 36 month assessments. 
	
	The two doubly-robust imputation estimators (blue and green)  give similar results to each other,  showing a modest average benefit from doenpizil at the 12 months time point, as expected.  In all cases, this estimated early benefit tends to diminish over time, linearly because of the form of the model.The estimates made using the GEE-IND approach (orange line) diminish faster than the doubly-robust estimates, resulting in a smaller estimated treatment effect at the last visit. In fact, the GEE-IND estimate at 36 months concluded that the CDR-SOB change between the donepezil arm and the placebo arm was 0.001, which is in nominally favor of the placebo arm, while the AIPW-I and AIPW-S imputation methods obtained estimated values of -0.078 and -0.059 in support of the donepezil arm. Table \ref{table5} presents the estimates and standard errors. Since all the covariates are baseline characteristics, the AIPW-S method shows  good relative efficiency, with the smallest standard errors of 0.458 and 0.262 for MMSE and CDR-SOB at 36 months, respectively.
	
	
	\begin{figure}[h]   \centering
		\includegraphics[width=\textwidth]{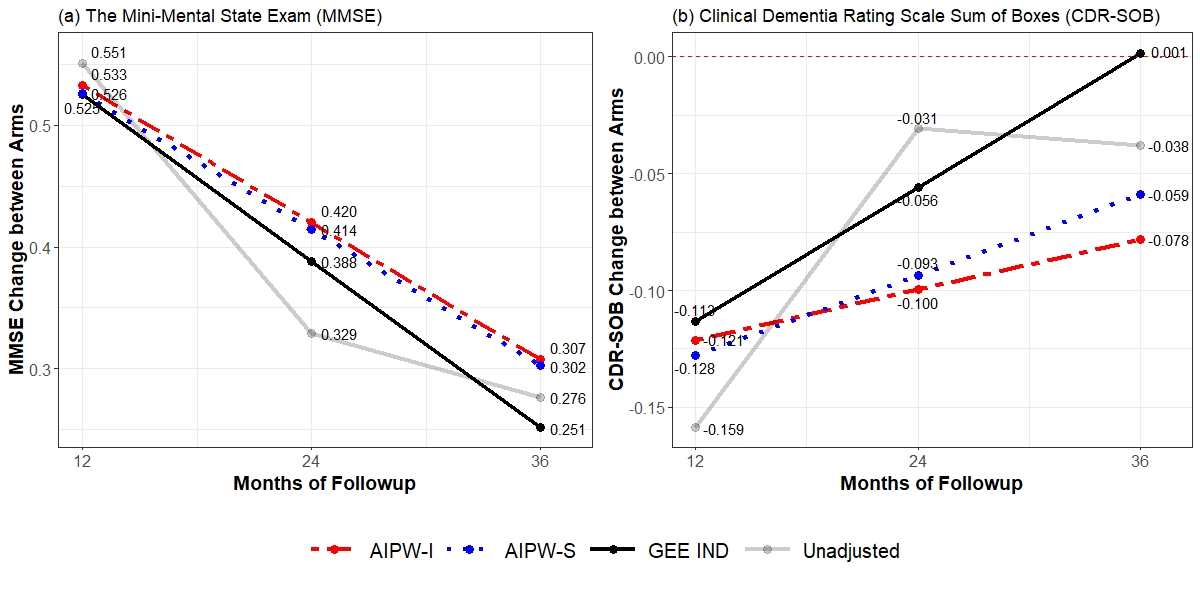}
		\caption{Least-Squares Estimates of Mean Difference between Donepezil and Placebo Arms in Within-Patient Change, for the MCI trial.  Left hand panel shows MMSE; a positive difference is in favor of active treatment.  Right hand panel shows CDR-SOB; a negative difference is in favor of active treatment. In both cases the apparent early benefit of donepezil (12 months time point) appears to diminish over time.  Gray solid line:  unadjusted mean difference, using the avialible data; Black solid line: adjusted estimates from a GEE model with independent working correlation; Blue dotted line: doubly-robust estimate using the simplified AIPW-S imputation for missing outcomes; Red dashed line: doubly-robust estimate using AIPW-I to adjust for missing outcomes.} \label{forestplot}
	\end{figure}
	
	\begin{table*}
		\caption{MCI trial, Estimates of Regression Coefficients for \textit{time} and \textit{time:arm}, and Least-Squares Estimates of Mean Difference between Arms in Within-Patient Change, at 12, 24 and 36 Months.
		}
		\label{table5} 
		\renewcommand{\arraystretch}{1.1}
		\centering
		\resizebox{\textwidth}{!}{%
			\begin{tabular}{lllllllllllllll}
				\hline
				\hline
				& \multicolumn{2}{c}{time} & & \multicolumn{2}{c}{time:arm} & & \multicolumn{2}{c}{Diff. in 1-yr change} & & \multicolumn{2}{c}{Diff. in 2-yr change} & & \multicolumn{2}{c}{Diff. in 3-yr change}\\
				\hline
				& Estimate & SE & & Estimate & SE & & Estimate & SE & & Estimate & SE & & Estimate & SE \\ 
				\hline
				\multicolumn{15}{l}{\textit{Mini-Mental State Exam (MMSE)}} \\ 
				GEE-IND &  -0.803 & 0.148 &  & -0.137 & 0.226 &  & 0.525 & 0.224 &  & 0.388 & 0.285 &  & 0.251 & 0.463 \\ 
				AIPW-I & -0.924 & 0.138 &  & -0.113 & 0.224 &  & 0.533 & 0.223 &  & 0.420 & 0.290 &  & 0.307 & 0.468 \\ 
				AIPW-S & -0.909 & 0.138 &  & -0.112 & 0.218 &  & 0.526 & 0.222 &  & 0.414 & 0.286 &  & 0.302 & 0.458 \\ 
				\hline
				\multicolumn{15}{l}{\textit{Clinical Dementia Rating Scale (CDR) Sum of Boxes}} \\ 
				GEE-IND   & 0.538 & 0.087 &  & 0.057 & 0.122 &  & -0.113 & 0.113 &  & -0.056 & 0.173 &  & 0.001 & 0.277 \\ 
				AIPW-I  & 0.632 & 0.081 &  & 0.022 & 0.111 &  & -0.121 & 0.110 &  & -0.100 & 0.168 &  & -0.078 & 0.263 \\ 
				AIPW-S  & 0.625 & 0.081 &  & 0.034 & 0.110 &  & -0.128 & 0.109 &  & -0.093 & 0.168 &  & -0.059 & 0.262 \\  
				\hline
				\hline
			\end{tabular}
		}
		\begin{tablenotes}
			\item Standard errors were derived from 500 times nonparametric bootstrap.
		\end{tablenotes}
	\end{table*}
	
	
	Table \ref{table5} also shows the estimated values of $\beta_{time}$ and $\beta_{time:arm}$ by the two AIPW imputation methods and the GEE-IND model. Again, the AIPW-I and AIPW-S obtained similar estimated values, while overall AIPW-S had smaller estimated standard errors. GEE-IND reported different estimated values and greater standard errors comparing with the two doubly-robust imputation methods. 
	
	
	\section{Discussion}
	
	In this paper, using the approach of \cite{Seaman2009}, we have developed an imputation framework for AIPW-based doubly-robust estimators which is suitable for a class of general longitudinal estimating equations when the data are observed with monotone dropout under MAR.  
	Confidence intervals are constructed using the bootstrap, and the estimators can be implemented using standard software tools.  We develop two specific imputation estimators, AIPW-I and AIPW-S, that appear to have reliable performance within the framework.  The simpler AIPW-S estimator uses only baseline values in the imputation models and reduces the number of models that need to be estimated from $M(M-1)/2$ to maximum $M-1$, at the cost of some stronger assumptions and potential loss of efficiency.   We show that the imputed completed data  from either AIPW-I or AIPW-S can  be substituted into a full data estimating equation $\sum_{i=1}^N U_i(\mathbf{\beta}) =0$, to obtain an estimator which inherits  doubly-robust properties from the imputation step.  In particular, as is the case in other imputation settings, a given APIW-imputed data set can be used across several different estimands, as the imputation models and the analysis models do not need to be coupled to one another.  For example, a single AIPW-imputed data set might be used across several primary or secondary analyses in a clinical trial. Under a pre-specified estimand, the AIPW-I estimator has the same workflow as the estimator of \cite{Seaman2009}, however, once  a completed data set has been created it can be used for variety of other estimands without re-fitting the doubly-robust model again. AIPW-S has considerably simplified computational demands compared to AIPW-I.

	We conducted extensive simulation studies across different estimands to evaluate the performance of these AIPW-based imputation estimators, including comparison with the approach proposed by Bang and Robins (BR*), which incorporates the probability weights inside the estimating equations as an additional covariate. The BR* estimator as originally proposed is suitable for estimating  the outcome at a final time point, $E[Y_{iM}]$.  Interestingly, we found only a few prior studies which make similar comparisons in the longitudinal setting \citep{Tsiatis2011, Seaman2009, Schnitzer2016}.   
	
	When the estimand of interest was  $E[Y_{iM}]$ or was the regression coefficient of a baseline covariate, the three doubly-robust estimators,  BR*, AIPW-I, and AIPW-S, were consistent as expected, even when the imputation models or the dropout models were badly misspecified. Furthermore, these doubly-robust estimators appeared to be competitive in efficiency with the correctly specified MLE estimators, when the imputation models were correctly specified.  Bootstrap confidence intervals had correct coverage probabilities, with reasonable bootstrap sample sizes.
	
	When the primary estimand  was  the coefficient of  a time-varying covariate, the two AIPW imputation methods, AIPW-I and AIPW-S, again demonstrated performance comparable to efficient MLE estimators whenever the imputation models were correctly specified.  When the imputation model was incorrect but the dropout model was correct, the two AIPW imputation estimators still performed well, with low bias and good coverage probabilities, although with some loss of efficiency.  In particular, we  studied an extreme case of a misspecified imputation model, which omitted the time-varying covariate of interest (i.e. the estimand in the analysis model was the coefficient of a  time-variable which was omitted from the imputation model).  This illustrates a case when the imputation model might be constructed prior to considering the estimand of interest, such as could be the case when using already imputed data to study a new estimand.   The two imputation-framework estimators worked well in this case.

	
	In this  last scenario, we also compared  use of the  BR* estimator for the imputation step within the imputation framework.  Our rationale for doing this is because, in the case when the imputation model is correct and agrees with the analysis model, the BR* estimator is similar in form to an imputation estimator.  However, when used as an imputation estimator in the extreme case described above, where the estimand in the analysis model was the coefficient of a  time-varying variable which was omitted from the imputation model, 
	the result had unacceptably large bias compared to AIPW-I or AIPW-S.  This is not unexpected, because the
	BR* estimator is as yet only demonstrated to work in the case of baseline covariates of interest and outcome at a given time point, and we included time-varying covariates and longitudinal outcomes. Additionally, our misspecified models omitted one of the variables in the final estimating equations,  a common approach to model misspecification in prior simulation studies \citep{Tsiatis2011, Seaman2009, Schnitzer2016}.   In our case, we omitted one of the variables that defines the target estimand, a case of potentially severe model misspecification.  Notably,  the BR* estimator requires that the $X$ which define the estimand of interest are  included in the imputation models.  This adds a requirement that the imputation models  be  tailored to the final estimand, a requirement that our imputation-based approach is  designed to avoid.  
	
	This AIPW-based doubly-robust imputation framework provides a good opportunity for sensitivity analysis in randomized trials and other settings. For example, in our application to a trial of donepezil for Alzheimer's disease, it was clear that differential dropout might introduce significant bias in the comparisons of interest between study arms (Figure \ref{MCIplot}). Because the dropout rates depended on disease severity and differed significantly between the donepezil and placebo arms, it is unclear whether a standard regression model using a GEE approach would produce consistent estimates. Our application of doubly-robust imputation to the donepezil data illustrates a case where the direction of the estimated treatment effect might be reversed by using a doubly-robust estimator, indicating significant bias in the standard MMRM estimates. Indeed, our simulation studies demonstrated how a mixed-effects model with a simplified covariance matrix, such as is not infrequently used in clinical trials, can produce badly biased results. In such a situation, doubly-robust AIPW-imputation based estimators such as AIPW-I and AIPW-S might be helpful as a sensitivity analysis for the primary analysis. The same data might then be useful for additional analyses, which is an advantage of the imputation-based approach.   Importantly, the theoretical basis for consistency  is well established for these DR imputation estimators in the setting of informative longitudinal dropout, which is not the case, to our knowledge, for the imputation by chained equations approach in common use.  Future work will compare these doubly-robust estimators to the more usual multiple imputation approaches used as sensitivity analyses in the clinical trials setting.


	
	
	\begin{acks}[Acknowledgments]
		The authors gratefully acknowledge Dr. Howard Feldman, PI of the Alzheimer's Disease Cooperative Study, for his mentorship and support of the first author during his Ph.D. studies at University of California San Diego. 
		
		The authors also thank editor Jeffrey Morris, the corresponding associate editor, and two referees for insightful comments which led to important improvements over the earlier drafts.
	\end{acks}
	
	\begin{funding}
		The authors gratefully acknowledge support from grant number R01AG061146 from the US DHHS NIH National Institute on Aging, and from grant number P30 CA023100 from the NIH National Cancer Institute. 
		
		Data collection and sharing for this project were obtained from the Alzheimer’s Disease Cooperative Study (ADCS), funded by the National Institutes of Health Grant U19 AG010483.
		
		
	\end{funding}

	\begin{appendix}
		\section{Double robustness of AIPW-I}\label{secApp1}
		Recall that  $\mathbf{\beta}$ is the parameter of interest, which is defined as the solution to  the full data estimating equations $E[U(\bar L_{M}, \mathbf{\beta}, \varphi_0)]=0,$ where here the $\varphi$ are full-data nuisance parameters, assumed for now to be known, and the estimating functions $U$ are taken to be orthogonal to the full data tangent space generated by the $\varphi$.  Following 
		\cite{Tsiatis2006}, we will sometimes suppress the explicit dependence of  $U$ on $\varphi$. Suppose there is a known monotone missingness mechanism $\lambda_j(\alpha_0) =Pr(R_{j} = 0 | \bar L_{j-1}, R_{j-1} = 1, \alpha_0)$.   In this case a semiparametric estimator for $\mathbf{\beta}$  using the observed data can be obtained by deriving elements orthogonal to the nuisance tangent spaces for $\varphi$ and $\alpha$, and using these to
		to adapt $U$ into the observed data estimating equations.  In our setting of monotone dropout, this approach leads to an  AIPW estimating equations of the form  
		\citep[equation 9.4 and Theorem 9.2]{Tsiatis2006}
		\begin{equation}\label{appendA1}
			E\bigg[   \frac{C_{M}}{\pi_{M}(\alpha)}U(\bar L_{M}, \mathbf{\beta}) +  \displaystyle\sum_{j=1}^{M-1} \bigg(\frac{C_{j} - \lambda_{j+1} (\alpha)R_{j}}{\pi_{j+1} (\alpha)}\bigg) h^{j}(\bar L_{j}, \mathbf{\beta}, \varphi) \bigg)\bigg] =0.
		\end{equation}
		for arbitrary $h_j$. Motivated by 
		\cite{Tsiatis2006} Theorem 10.1,  which gives the form of the optimal observed-data influence function, one may take the $h^j$ to have the corresponding form, that is $ h^{j}(\bar L_{ij}, \mathbf{\beta}, \varphi) = E[U(\bar L_{i,M}, \mathbf{\beta}, \varphi )| \bar L_{ij}, R_{ij} = 1]. $ Under suitable smoothness conditions on the functions $U$ and $h$, and moment conditions on $ L_{M}$, standard arguments for m-estimators show that the solution $\mathbf{\hat \beta}$ to (\ref{appendA1}) is a consistent asymptotically normal estimator of $\mathbf{\beta}$ using the observed data, with influence function  as given in Tsiatis equation (9.5).

		Note that the parameters $\pi$ and $\lambda$ depend on $\alpha_0$, which is assumed known.  When $\alpha$ is unknown, it may be substituted by any efficient estimator   $\hat \alpha$ (such as the MLE), and $\mathbf{\hat \beta}$ will remain a consistent asymptotically normal estimator of $\mathbf{\beta}$ (Tsiatis Theorem 9.1).  Also, if $\hat \varphi$ is a root-n consistent estimator for $\varphi_0$, then, under regularity conditions, substituting $\hat \varphi$ into estimating equations (\ref{appendA1}) will not perturb the asymptotic distribution of $\mathbf{\hat \beta},$ since $U$ has been taken to be orthogonal to the nuisance parameter space (see the discussion following (9.3) in Tsiatis).

		The double robustness property concerns consistency of $\mathbf{\hat \beta}$ when 
		$
		\hat \alpha \xrightarrow[]{p}  \alpha^*  
		$
		and
		$
		\hat \varphi  \xrightarrow[]{p} \varphi^*  
		$
		but either $\alpha^* \neq \alpha_0$ or $\varphi^*  \neq \varphi$.  Here we give a simple and direct proof of the double robustness of $\mathbf{\hat \beta}$ in the longitudinal case.
		
		\begin{theorem}
			Suppose that $\hat \alpha \xrightarrow[]{p}  \alpha^*  $
			and
			$\hat \varphi^* \xrightarrow[]{p} \varphi^*  $
			and either a) $\alpha^* = \alpha_0$ or  b) $\varphi^*  =\varphi_0$. Then the solution $\mathbf{\hat \beta}$ to  the observed data estimating equations of the form (\ref{appendA1}), with $\hat \alpha$ substituted for $\alpha$ and $\hat \varphi$ substituted for $\varphi,$ is consistent for $\mathbf{\beta}$.
		\end{theorem}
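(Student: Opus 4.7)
By standard $M$-estimator arguments, consistency of $\hat\beta$ follows from two ingredients: (i) that the empirical estimating equations converge uniformly to their population counterpart in a neighborhood of $\beta_0$, and (ii) that this population equation is zero at $\beta=\beta_0$ when the nuisance parameters $(\alpha,\varphi)$ are replaced by their (possibly wrong) probability limits $(\alpha^*,\varphi^*)$. Ingredient (i) is routine given the smoothness and moment conditions already in force, the positivity bound $\pi_M > \epsilon > 0$ (which keeps the weights bounded), and the assumed convergence of $\hat\alpha$ and $\hat\varphi$; I would sketch it but not dwell on it. The work lies entirely in showing
\[
E\!\left[ \frac{C_M}{\pi_M(\alpha^*)} U(\bar L_M, \beta_0) + \sum_{j=1}^{M-1} \frac{C_j - \lambda_{j+1}(\alpha^*) R_j}{\pi_{j+1}(\alpha^*)} h^j(\bar L_j, \beta_0, \varphi^*) \right] = 0
\]
under either misspecification regime.

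\textbf{Telescoping rewrite.}
The starting algebraic observation is that under monotone dropout $C_j = R_j - R_{j+1}$ (adopting $R_{M+1}:=0$, so $C_M = R_M$), and $\pi_{j+1}(\alpha) = \pi_j(\alpha)(1-\lambda_{j+1}(\alpha))$. Together these give the ratio identity
\[
\frac{C_j - \lambda_{j+1}(\alpha^*) R_j}{\pi_{j+1}(\alpha^*)} \;=\; \frac{R_j}{\pi_j(\alpha^*)} - \frac{R_{j+1}}{\pi_{j+1}(\alpha^*)}.
\]
Substituting, reindexing the resulting sum, and setting $h^M := U(\bar L_M, \beta_0)$ (with $R_1/\pi_1 = 1$) collapses the AIPW integrand $\Psi$ to the compact telescoping form
\[
\Psi \;=\; h^1(\bar L_1) + \sum_{j=2}^{M} \frac{R_j}{\pi_j(\alpha^*)}\bigl(h^j(\bar L_j) - h^{j-1}(\bar L_{j-1})\bigr).
\]
From here both robustness cases fall out by conditional expectations.

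\textbf{Dispatch.}
In case (a), $\alpha^* = \alpha_0$: MAR and the product form give $E[R_j \mid \bar L_M] = \pi_j(\alpha_0)$, and since $\pi_j(\alpha_0)$ is $\bar L_{j-1}$-measurable, iterated expectations yield $E\bigl[(R_j/\pi_j(\alpha_0)) f(\bar L_j)\bigr] = E[f(\bar L_j)]$ for any integrable $f$. The telescoping sum collapses to $E[h^M] - E[h^1] = E[U(\bar L_M,\beta_0)] - E[h^1]$, and adding the leading $E[h^1]$ gives $E[\Psi] = E[U(\bar L_M,\beta_0)] = 0$, independent of $\varphi^*$. In case (b), $\varphi^* = \varphi_0$: by MAR, $h^j = E[U \mid \bar L_j, R_j=1] = E[U \mid \bar L_j]$, so $E[h^j \mid \bar L_{j-1}] = h^{j-1}$ by the tower property. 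Conditioning each telescoping summand on $\bar L_{j-1}$ and applying $E[R_j g(\bar L_j) \mid \bar L_{j-1}] = \pi_j(\alpha_0)\,E[g(\bar L_j) \mid \bar L_{j-1}]$ (obtained by conditioning on $\bar L_M$ first and re-projecting) gives
\[
E\!\left[\frac{R_j}{\pi_j(\alpha^*)}(h^j - h^{j-1}) \,\Big|\, \bar L_{j-1}\right] \;=\; \frac{\pi_j(\alpha_0)}{\pi_j(\alpha^*)}\bigl(E[h^j \mid \bar L_{j-1}] - h^{j-1}\bigr) \;=\; 0,
\]
so each such term has zero marginal expectation; combined with $E[h^1] = E[U(\bar L_M,\beta_0)] = 0$, we get $E[\Psi] = 0$ irrespective of $\alpha^*$.

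\textbf{Main obstacle.}
All substantive content is packed into two MAR-based identities: $E[R_j \mid \bar L_M] = \pi_j(\alpha_0)$ (fueling case (a) directly and case (b) after double conditioning), and $E[U \mid \bar L_j, R_j=1] = E[U \mid \bar L_j]$ (which gives the correct form of $h^j$ used in case (b)). Both rest on the monotone MAR conditional independence $R_j \perp (L_j,\ldots,L_M) \mid \bar L_{j-1}, R_{j-1}=1$ and its iteration. The bookkeeping is careful but standard; no new probabilistic input is required. Plugging $\hat\alpha \to \alpha^*$ and $\hat\varphi \to \varphi^*$ into the empirical equations and propagating uniform convergence through the bounded weights then closes the argument.
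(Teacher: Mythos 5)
Your proof is correct, but it takes a genuinely different route from the paper's. The paper keeps the weights $w_j = (C_j - \lambda_{j+1}(\alpha^*)R_j)/\pi_{j+1}(\alpha^*)$ intact, uses the ``weights sum to one'' identity to rewrite the integrand as $U + \sum_j w_j(h^j - U)$, and then kills each augmentation term by conditioning on $(\bar L_j, R_j)$ and factoring the conditional expectation into a propensity factor (zero under (a), since $E[C_j\mid \bar L_j, R_j{=}1] = \lambda_{j+1}(\alpha_0)$) times an outcome factor (zero under (b), since $h^j(\varphi_0) = E[U\mid \bar L_j, R_j{=}1]$); both cases are thus dispatched by the two factors of a single product. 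You instead apply the summand-level ratio identity $w_j = R_j/\pi_j - R_{j+1}/\pi_{j+1}$ --- which is exactly the increment form of the paper's Lemma B.1, equations (\ref{appendB0})--(\ref{appendB00}), used there to derive AIPW-S but not invoked in the proof of Theorem A.1 --- to reach the sequential-regression representation $\Psi = h^1 + \sum_{j\ge 2}(R_j/\pi_j)(h^j - h^{j-1})$. This buys a particularly transparent case (a): the possibly-wrong $h^j(\varphi^*)$ cancel telescopically before any modeling assumption on $\varphi$ is needed, leaving $E[U]=0$ outright. Case (b) becomes a martingale-increment argument conditioning on $\bar L_{j-1}$, which requires (and you correctly note) that $\pi_j(\alpha^*)$ is $\bar L_{j-1}$-measurable and that $E[R_j g(\bar L_j)\mid \bar L_{j-1}] = \pi_j(\alpha_0)E[g(\bar L_j)\mid \bar L_{j-1}]$ under true MAR. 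Your form also connects more directly to the imputation representation (\ref{hatY}) in the body of the paper, while the paper's form makes the AIPW augmentation structure explicit; both rest on the same two MAR identities and the positivity bound, so the choice is one of bookkeeping rather than substance.
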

		
		\begin{proof}
			Under regularity conditions, the estimating equations given by (\ref{appendA1}) may be written as 
			
			\begin{equation} \label{appendA2}
				\frac{1}{N} \displaystyle\sum_{i=1}^{N} \bigg( \frac{C_{i,M}}{\pi_{i,M}(\bar L_{i,M},  \alpha^*)}U(\bar L_{i,M}, \mathbf{\beta}) + \displaystyle\sum_{j=1}^{M-1} \bigg(\frac{C_{ij} - \lambda_{j+1}(\bar L_{ij}, \alpha^*) R_{j}}{\pi_{j+1}(\bar L_{ij}, \alpha^*)}\bigg) h^{j}\big(\bar L_{ij},   \varphi^* \big) \bigg)  =0 
			\end{equation} 
			plus a remainder term which is $o_p(1)$.
			Let  $\mathbf{\hat \beta^*}$ be the solution to the estimating equations given by (\ref{appendA2}).  Then using a Taylor expansion argument, under regularity conditions on $U, H$ and $\bar L_{M}$, it can be shown that $ \mathbf{\hat \beta^*}= \mathbf{\hat \beta} + o_p(1)$, so it is enough to consider $\mathbf{\hat \beta^*}.$
			To prove that $\mathbf{\hat \beta^*}$ is   consistent, it is enough to show  that $\mathbf{\beta}$ is a solution to the observed data estimating equations
			\begin{equation}\label{appendA3} 
				E \bigg[ \frac{C_{M}}{\pi_{M}(\bar L_{M}, \alpha^*)}U(\bar L_{M}, \mathbf{\beta}) + \displaystyle\sum_{j=1}^{M-1} \bigg(\frac{C_{j} - \lambda_{j+1}(\bar L_{j}, \alpha^*) R_{j}}{\pi_{j+1}(\bar L_{j}, \alpha^*)}\bigg) h^{j}\big( \bar L_{j},  \varphi^* \big) \bigg] = 0
			\end{equation}
			if either a) or b) holds.
			
			Recall that
			\begin{eqnarray}\label{appendA4}
				\frac{C_{M}}{\pi_{M}(\bar L_{M},  \alpha)} + \displaystyle\sum_{j=1}^{M-1} \bigg(\frac{C_{j} - \lambda_{j+1}(\bar L_{j},  \alpha) R_{j}}{\pi_{j+1}(\bar L_{j}, \alpha)}\bigg) = 1
			\end{eqnarray}
			
			Thus, equation (\ref{appendA3}) can be written as
			\begin{align*} 
				E \bigg[ \frac{C_{M}}{\pi_{M}(\bar L_{M},\alpha^*)}U(\bar L_{M}, \mathbf{\beta}) + \displaystyle\sum_{j=1}^{M-1} \bigg(\frac{C_{j} - \lambda_{j+ 1}(\bar L_{j},\alpha^*) R_{j}}{\pi_{j+1}(\bar L_{j},\alpha^*)}\bigg) h^{j}\big( \bar L_{j},  \varphi^*\big) \bigg] & = \\
				E \bigg[ U(\bar L_{M}, \mathbf{\beta}) + (\frac{C_{M}}{\pi_{M}(\bar L_{M},\alpha^*)} - 1)U(\bar L_{M}, \mathbf{\beta}) + \displaystyle\sum_{j=1}^{M-1} \bigg(\frac{C_{j} - \lambda_{j+1}(\bar L_{j},\alpha^*) R_{j}}{\pi_{j+1}(\bar L_{j},\alpha^*)}\bigg) h^{j}\big( \bar L_{j},  \varphi^*\big) \bigg] & = \\
				E\bigg[U(\bar L_{M}, \mathbf{\beta})\bigg] + E\bigg[\displaystyle\sum_{j=1}^{M-1} \bigg(\frac{C_{j} - \lambda_{j+1}(\bar L_{j},\alpha^*) R_{j}}{\pi_{j+1}(\bar L_{j},\alpha^*)}\bigg) \bigg(h^{j}\big( \bar L_{j},  \varphi^*\big) - U(\bar L_{M}, \mathbf{\beta})\bigg) \bigg] & = 0
			\end{align*}
			
			Since $E\bigg[U(\bar L_{M}, \mathbf{\beta})\bigg] = 0$, it is enough to show that
			\begin{eqnarray}\label{appendA5}
				E_i\bigg[ \bigg(\frac{C_{j} - \lambda_{j+1}(\bar L_{j},\alpha^*) R_{j}}{\pi_{j+1}(\bar L_{j},\alpha^*)}\bigg) \bigg(h^{j}\big( \bar L_{j},  \varphi^*\big) - U(\bar L_{M}, \mathbf{\beta})\bigg) \bigg] & = 0
			\end{eqnarray}
			for all $j = 1,...,M-1$, if if either a) or b) holds.
			We may take iterated expectations to show that
			
			\begin{align*} 
				E \bigg[ \bigg(\frac{C_{j} - \lambda_{j+1}(\bar L_{j},\alpha^*) R_{j}}{\pi_{j+1}(\bar L_{j},\alpha^*)}\bigg) \bigg(h^{j}\big( \bar L_{j}, \hat \varphi \big) - U(\bar L_{M}, \mathbf{\beta})\bigg) \bigg] & = \\
				E \bigg\{ E\bigg[ \bigg(\frac{C_{j} - \lambda_{j+1}(\bar L_{j},\alpha^*) R_{j}}{\pi_{j+1}(\bar L_{j},\alpha^*)}\bigg) \bigg(h^{j}\big( \bar L_{j}, \hat \varphi \big) - U(\bar L_{M}, \mathbf{\beta})\bigg) \bigg| \bar L_{j}, R_{j} \bigg] \bigg\} & = \\
				E \bigg[ \bigg(\frac{E(C_{j} | L_{j}, R_{j}) - \lambda_{j+1}(\bar L_{j},\alpha^*) E(R_{j}| L_{j}, R_{j})}{\pi_{j+1}(\bar L_{j},\alpha^*)}\bigg) \bigg(h^{j}\big( \bar L_{j},  \varphi^* \big) - E[U(\bar L_{M}, \mathbf{\beta}, \varphi_0 )| \bar L_{j}, R_{j} ])\bigg) \bigg]
			\end{align*}
			Consider the numerator of the left hand factor above.  If $R_j=0$, both terms in this numerator are zero, so the difference is zero.  If $R_j=1$ and a) holds, then  $\lambda_{j+1}(\bar L_{j},\alpha^*) = Pr(R_{j+1} = 0 | L_{j}, R_{j} = 1) = Pr(C_{j} = 1 | L_{j}, R_{j} = 1)$, and again the difference is zero.  This demonstrates (\ref{appendA5}). If b) holds, that is, if  $ h^{j}(\bar L_{j}, \mathbf{\beta}, \varphi^*) = E[U(\bar L_{M}, \mathbf{\beta}, \varphi_0 )| \bar L_{j}, R_{j} = 1] $,  then the right hand factor is zero, establishing the remaining case.
		\end{proof}
		
		%
		
		
		\section{Derivation of AIPW-S and Demonstration of its Double Robustness}\label{secapp2}
		
		We will find a short lemma to be convenient.
		
		\begin{lemma} The following algebraic equalities hold:
			
			\begin{equation}\label{appendB0}
				\displaystyle\sum_{j=l}^{M} (\frac{C_j - \hat \lambda_{j+1} R_j}{\hat \pi_{j+1}}) = \frac{R_l}{\hat \pi_l}
			\end{equation}
			
			and
			
			\begin{equation}\label{appendB00}
				\displaystyle\sum_{t=1}^{l-1} (\frac{C_t - \hat \lambda_{t+1} R_t}{\hat \pi_{t+1}}) = 1 - \frac{R_l}{\hat \pi_l}. 
			\end{equation}
		\end{lemma}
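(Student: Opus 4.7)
The plan is to prove identity (\ref{appendB0}) by backward induction on $l$, and then to deduce (\ref{appendB00}) as a corollary using the paper's identity (\ref{sec5.1}), which may itself be restated as $\sum_{t=1}^{M} (C_t - \hat\lambda_{t+1} R_t)/\hat\pi_{t+1} = 1$ under the conventions $\hat\pi_{M+1}=\hat\pi_M$ and $\hat\lambda_{M+1}=0$. Given (\ref{appendB0}), subtracting $\sum_{j=l}^{M}(C_j - \hat\lambda_{j+1}R_j)/\hat\pi_{j+1} = R_l/\hat\pi_l$ from this total yields (\ref{appendB00}) immediately.

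For the base case $l=M$, the sum collapses to the single term $(C_M - \hat\lambda_{M+1}R_M)/\hat\pi_{M+1} = C_M/\hat\pi_M$, and by the convention $C_M = R_M$ (for completers both equal $1$ and for non-completers both equal $0$), this equals $R_M/\hat\pi_M$, as required. For the inductive step, I would assume the identity holds at $l+1$ and write
\begin{equation*}
\sum_{j=l}^{M} \frac{C_j - \hat\lambda_{j+1} R_j}{\hat\pi_{j+1}}
= \frac{C_l - \hat\lambda_{l+1} R_l}{\hat\pi_{l+1}} + \frac{R_{l+1}}{\hat\pi_{l+1}},
\end{equation*}
and then use the factorization $\hat\pi_{l+1} = \hat\pi_l(1-\hat\lambda_{l+1})$ together with the monotone dropout structure.

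The key check is a small case analysis on the values $(R_l, R_{l+1})$, which under monotone dropout can only be $(0,0)$, $(1,1)$, or $(1,0)$. In case $(0,0)$ we have $C_l=0$ as well, and both sides vanish. In case $(1,1)$ we have $C_l=0$, so the right-hand side becomes $(-\hat\lambda_{l+1} + 1)/\hat\pi_{l+1} = 1/\hat\pi_l = R_l/\hat\pi_l$. In case $(1,0)$ we have $C_l = 1$, so the right-hand side becomes $(1-\hat\lambda_{l+1})/\hat\pi_{l+1} + 0 = 1/\hat\pi_l = R_l/\hat\pi_l$. This completes the induction.

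There is no real obstacle here; the only subtlety is keeping track of the boundary conventions ($\hat\lambda_1=\hat\lambda_{M+1}=0$, $\hat\pi_{M+1}=\hat\pi_M$, and $C_M = R_M$) so that the base case is genuinely an instance of the stated identity rather than a spurious artifact. Once those conventions are laid out, the induction is mechanical and identity (\ref{appendB00}) drops out by subtraction as noted above.
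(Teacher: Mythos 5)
Your proof is correct, and it reaches the first identity by a different route than the paper. The paper argues directly and globally: it splits into three cases according to the subject's dropout pattern (never observed at time $l$; observed at $l$ but dropping out at some $k$ with $l<k\le M$; completer), and in the latter two cases writes out the full sum $\frac{-\hat\lambda_{l+1}}{\hat\pi_{l+1}}+\dots+\frac{1-\hat\lambda_k}{\hat\pi_k}$ and lets it telescope to $1/\hat\pi_l$ via $\hat\pi_{j+1}=\hat\pi_j(1-\hat\lambda_{j+1})$. You instead run a backward induction on $l$, which localizes the case analysis to the pair $(R_l,R_{l+1})$ and absorbs the telescoping into the inductive hypothesis; the single factorization $\hat\pi_{l+1}=\hat\pi_l(1-\hat\lambda_{l+1})$ then closes each step. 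Your version is arguably tidier in that it never needs to name the dropout time $k$ or display a long telescoping sum, and it makes the role of the boundary conventions ($\hat\lambda_{M+1}=0$, $\hat\pi_{M+1}=\hat\pi_M$, $C_M=R_M$) explicit in the base case; the paper's version has the advantage of exhibiting the weights' cancellation pattern concretely, which is reused informally elsewhere in the text. Both proofs obtain the second identity the same way, by subtracting the first from the total-weight identity $\sum_{t=1}^{M}(C_t-\hat\lambda_{t+1}R_t)/\hat\pi_{t+1}=1$. One small point worth stating explicitly in your case $(0,0)$: you should note that $C_l=0$ follows because $C_l=1$ requires $R_l=1$, which you implicitly use; with that remark the argument is complete.
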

		
		\begin{proof}
			Equation (\ref{appendB0}) follows from considering three cases.
			First, if $R_l = 0$, by the definition of monotone missing, $\forall k \geq l$, we have $R_k = 0$ and $C_k = 0$. Thus, $\displaystyle\sum_{j=l}^{M} (\frac{C_j - \lambda_{j+1} R_j}{\pi_{j+1}}) = \frac{R_l}{\pi_l} = 0$.\\
			Second, consider the case where $R_l = 1$ and there is a $  k$ with $ l < k \leq M$ such that $R_k = 0$ and $R_{k-1} = 1$. By the definition of monotone missing and hazard function, $C_{j} = 1 \text{ if } j = k-1$, otherwise $C_{j} = 0$. $R_j = 1 \text{ if } j<k$, otherwise $R_j = 0$. $\hat \pi_l = \prod^l_{j=1} (1 - \lambda_j)$. Thus,
			\begin{align*} 
				\displaystyle\sum_{j=l}^{M} (\frac{C_j - \hat \lambda_{j+1} R_j}{\hat \pi_{j+1}}) & = 
				\frac{-\hat \lambda_{l+1}}{\hat \pi_{l+1}} + \dots + \frac{-\hat \lambda_{k-1}}{\hat \pi_{k-1}} + \frac{1-\hat \lambda_{k}}{\hat \pi_{k}} + 0 + \dots + 0 \\ 
				& = \frac{R_l}{\hat \pi_l} = \frac{1}{\hat \pi_l}
			\end{align*}
			Third, if $R_{iM} = 1$, $\forall k \leq M$, $R_k = 1$. Thus, $R_l = 1$ as well. By the definition of monotone missing and hazard function, we have
			\begin{align*} 
				\displaystyle\sum_{j=l}^{M} (\frac{C_j - \hat \lambda_{j+1} R_j}{\hat \pi_{j+1}}) & = 
				\frac{-\hat \lambda_{l+1}}{\hat \pi_{l+1}} + \dots + \frac{-\hat \lambda_{M}}{\hat \pi_{M}} + \frac{1}{\hat \pi_{M}} \\ 
				& = \frac{R_l}{\hat \pi_l} = \frac{1}{\hat \pi_l}
			\end{align*}
			\\
			This demonstrates (\ref{appendB0}). Equality (\ref{appendB00}) then follows immediately from (\ref{appendB0}) and (\ref{appendA4}), which completes the proof.  
		\end{proof}
		

		As before, consider estimating equations (\ref{appendA1}).  We will suppress the dependence on $\alpha$ as this is not our focus here.  Define $\pi_{M+1}=\pi_{iM}$, $\lambda_{M+1}=0$, and $h^M(\mathbf{\beta}) = U(\bar L_{i,M}, \mathbf{\beta}).$ Then, the estimating equations for (\ref{appendA1}) can be written more simply as
		\begin{equation}\label{appendB1}
			\displaystyle\sum_{i=1}^{N} \bigg( \displaystyle\sum_{j=1}^{M} \bigg(\frac{C_{ij} - \lambda_{i,j+1} R_{ij}}{\pi_{i,j+1}}\bigg) h^{j}(\bar L_{ij}, \mathbf{\beta}, \varphi) \bigg) = 0
		\end{equation}
		This follows because, 1) if $R_{iM} = 0$, then $\frac{C_{M}}{\pi_{M}} = \frac{C_{M} - \lambda_{M+1} R_{M}}{\pi_{M+1}} = 0$; 2) if $R_{iM} = 1$, then $\frac{C_{M}}{\pi_{M}} = \frac{C_{M} - \lambda_{M+1} R_{M}}{\pi_{M+1}} = \frac{1}{\pi_{M}}$. \\ 
		
		Now consider an estimating functions $U$ of the form $U(\bar L_{M}, \mathbf{\beta}) = q(\mathbf{\beta})(\mathbf{Y} - \mathbf{\mu (\mathbf{\beta})})$, where $\mathbf{Y} = (Y_{1}, Y_{2},..., Y_{M})^T$ and $\mathbf{\upmu} = (\mu_{1}, \mu_{2},..., \mu_{M})^T$, and where $q(\mathbf{\beta}) = \frac{\partial \mathbf{\upmu}^T}{\partial \mathbf{\beta}} V^{-1}$. Furthermore, suppose $\hat q(\mathbf{\beta})$ is a consistent estimator of $q(\mathbf{\beta})$, such as is readily available for the canonical generalized estimating equations. Recall that $ h^{j}(\bar L_{j}, \mathbf{\beta}, \varphi) = E[U(\bar L_{M}, \mathbf{\beta}, \varphi )| \bar L_{j}, R_{j} = 1], $ and let $\hat h^j =  h^{j}(\bar L_{j}, \mathbf{\beta}, \hat \varphi)$ for $ \hat \varphi $ an estimate of $\varphi$.

		In this special case, $\hat h^j$ takes the simple form:
		\begin{align*} 
			\hat{h}^1 & =\hat q(\mathbf{\beta}) (
			\begin{bmatrix}
				y_1 & \hat y_2^{(1)} & \hat y_3^{(1)} & \cdots  & \hat y_{iM}^{(1)}
			\end{bmatrix}^T - \mathbf{\upmu} )\\ 
			\hat{h}^2 & =\hat q(\mathbf{\beta}) (
			\begin{bmatrix}
				y_1 & y_2 & \hat y_3^{(2)} & \cdots  & \hat y_{iM}^{(2)}
			\end{bmatrix}^T - \mathbf{\upmu} )\\ 
			\vdots \\ 
			\hat{h}^M & =\hat q(\mathbf{\beta}) (
			\begin{bmatrix}
				y_1 & y_2 & y_3 & \cdots  & y_{iM}
			\end{bmatrix}^T - \mathbf{\upmu} )
		\end{align*}
		where  $\hat y^j_k$ is the corresponding estimate of  $E[Y_k | \bar L_{j}, R_{j} = 1, \mathbf{\beta}, \hat \varphi] ,$  for $k >j$.
		Thus we can write:
		\begin{equation}\label{appendB2}
			\displaystyle\sum_{j=1}^{M} (\frac{C_j - \hat \lambda_{j+1} R_j}{\hat \pi_{j+1}}) \hat h^j =
			\hat q(\mathbf{\beta})
			\begin{bmatrix}
				\displaystyle\sum_{j=1}^{M} (\frac{C_j - \hat \lambda_{j+1} R_j}{\hat \pi_{j+1}}) (y_1-\mu_1) \\[0.4em]
				\vdots \\[0.4em]
				\displaystyle\sum_{j=l}^{M} (\frac{C_j - \hat \lambda_{j+1} R_j}{\hat \pi_{j+1}}) (y_l-\mu_l) +
				\displaystyle\sum_{t=1}^{l-1} (\frac{C_t - \hat \lambda_{t+1} R_t}{\hat \pi_{t+1}}) (\hat y_l^{(t)}-\mu_l) \\[0.4em]
				\vdots \\[0.4em]
				\frac{C_{iM}}{\hat \pi_{M}} (y_{iM}-\mu_{iM}) +
				\displaystyle\sum_{t=1}^{M-1} (\frac{C_t - \hat \lambda_{t+1} R_t}{\hat \pi_{t+1}}) (\hat y_{iM}^{(t)}-\mu_{iM})
			\end{bmatrix}
		\end{equation}
		\\

		Now consider the case where $\hat y_l^{(j)}$ is  an estimate of $E[Y_l |X_0, v]$ independent of j, where $X_0$ contains baseline covariates and $v$ indicates \textit{visit} or \textit{time}, such as a linear mixed-effects model. Then $\hat y_l^{(1)} = \hat y_l^{(2)} = ... = \hat y_l^{(M)}$. Equation (\ref{appendB2}) then can be simplified as:
		\begin{equation}\label{appendB3}
			\displaystyle\sum_{j=1}^{M} (\frac{C_j - \hat \lambda_{j+1} R_j}{\hat \pi_{j+1}}) \hat h^j =
			\hat q(\mathbf{\beta})
			\begin{bmatrix}
				\frac{ R_1}{\hat \pi_{1}} (y_1-\mu_1) \\[0.4em]
				\vdots \\[0.4em]
				\frac{R_l}{\hat \pi_{l}} (y_l-\mu_l) +
				(1 - \frac{R_l}{\hat \pi_{l}}) (\hat y_l - \mu_l) \\[0.4em]
				\vdots \\[0.4em]
				\frac{C_{iM}}{\hat \pi_{M}} (y_{iM}-\mu_{iM}) +
				(1 - \frac{C_{iM}}{\hat \pi_{M}}) (\hat y_{iM} - \mu_{iM})
			\end{bmatrix}
		\end{equation}
		
		
		\subsection{Proof of double robustness of AIPW-S\label{app2.1a}}
		Following equation (\ref{appendA1}), we can obtain similar estimating equations for AIPW-S:
		\begin{equation}\label{appendB10}
			\displaystyle\sum_{i=1}^{N} \bigg( \frac{C_{i,M}}{\pi_{i,M}}U(\bar L_{i,M}, \mathbf{\beta}) + \displaystyle\sum_{j=1}^{M-1} \bigg(\frac{C_{ij} - \lambda_{i,j+1} R_{ij}}{\pi_{i,j+1}}\bigg) h(X_{i,0}, v, \mathbf{\beta}) \bigg) = 0
		\end{equation}
		
		where $ h(X_{i,0}, v, \mathbf{\beta}) = E[U(\bar L_{i,M}, \mathbf{\beta}) | X_{i,0}, v] $.
		Thus, to show that AIPW-S is doubly-robust, we may use similar arguments as in \textit{Theorem A1}. In particular, we can write an equation for  AIPW-S that is similar to equation (\ref{appendA5}):
		\begin{eqnarray}\label{appendB11}
			E_i\bigg[ \bigg(\frac{C_{ij} - \lambda_{i,j+1}(\bar L_{ij},\hat \alpha) R_{ij}}{\pi_{i,j+1}(\bar L_{ij},\hat \alpha)}\bigg) \bigg(h\big( X_{i,0}, v, \hat \varphi \big) - U(\bar L_{i,M}, \mathbf{\beta})\bigg) \bigg] & = 0
		\end{eqnarray}
		for all $j = 1,...,M-1$.
		In the case where $X$ and $\nu$ are sufficient to render the data MAR, by definition of the law of iterated expectation, it is straightforward to show the double robustness for equation (\ref{appendB11}) following  Appendix A.
		\clearpage
		
		\section{Simulation methods specification}\label{apend3}
		\begin{table*}[h]
			\caption{Model specifications for the methods performed in simulation.
			}
			\renewcommand{\arraystretch}{1.1}
			\resizebox{\textwidth}{!}{
				\begin{tabular}{l|p{0.22\linewidth}|p{0.22\linewidth}|p{0.22\linewidth}|p{0.22\linewidth}}
					\hline
					\hline
					\\
					Methods & Y model correct & P model correct	& Y model incorrect	& P model incorrect \\ 
					\hline
					\multirow{2}{*}{BR*} & Recursive OLS & Logistic regression models & Recursive OLS & Logistic regression models \\\cline{2-5}
					                     & x1, x2, and historical Y & x2 and historical Y & x1, and historical Y & historical Y \\\cline{2-5}  
					\hline 
					\multirow{2}{*}{AIPW-I} & Paik's sequential OLS    &	Logistic regression models & Paik's sequential OLS    &	Logistic regression models \\\cline{2-5}
					& x1, x2, and historical Y  & x2 and historical Y & x1, and historical Y  & historical Y \\\cline{2-5}
					\hline
					\multirow{2}{*}{AIPW-S} & MMRM with unstructured correlation matrix   &	Logistic regression models & MMRM with unstructured correlation matrix   &	Logistic regression models\\\cline{2-5}
					& x1, x2, t and x2*t  & x2 and historical Y & x1 and t  & historical Y \\\cline{2-5}
					\hline
					\multirow{2}{*}{Paik} & Paik's sequential OLS    &	N/A & Paik's sequential OLS    & N/A \\\cline{2-5}
					& x1, x2, and historical Y  & N/A & x1, and historical Y  & N/A \\\cline{2-5}
					\hline
					\multirow{2}{*}{MMRM} & MMRM with unstructured correlation matrix    &	N/A & MMRM with unstructured correlation matrix    & N/A \\\cline{2-5}
					& x1, x2, t and x2*t  & N/A & x1 and t  & N/A \\\cline{2-5}
					\hline
					\multirow{2}{*}{WGEE} & Weighted GEE model with unstructured correlation matrix   &	Logistic regression models & Weighted GEE model with unstructured correlation matrix   &	Logistic regression models \\\cline{2-5}
					& x1, x2, t and x2*t  & x2 and historical Y & x1 and t  & historical Y \\\cline{2-5}
					\hline
					\multirow{2}{*}{GEE IND} & GEE model with independence correlation matrix   & N/A & GEE model with independence correlation matrix   &	N/A \\\cline{2-5}
					& x1, x2, t and x2*t  & N/A & x1 and t  & N/A \\\cline{2-5}
					\hline
					\hline
					
			\end{tabular}}
			\begin{tablenotes}
				\item 1. BR* represents Bang and Robins approach; AIPW-I represents AIPW imputation estimator; AIPW-S represents the simpler AIPW imputation estimator; Paik represents Paik's mean sequential imputation estimator; MMRM stands for mixed model with repeated measures; WGEE stands for weighted GEE; GEE IND stands for GEE with independence correlation matrix.
				\item 2. For each method, the first row indicates the model performed in the simulation, and the second row indicates the covariates controlled for in the model.
				\item 3. For BR* we further applied forward variable selection for fitting the Y models, as described in the main paper.
				\item 4. N/A refers to not available, which means the method does not fit in the given scenario.
			\end{tablenotes}
		\end{table*}
		
		\section{Simulation results in extreme dropout construct}\label{secapp3}
		We conduct a comparison of BR*, AIPW-I, AIPW-S, Paik's mean imputation, MMRM, WGEE, and GEE-IND in a extreme construct, where the mean dropout rates for $Y_2$ is 20\% and for $Y_3$ is 48\%. Generating model and dropout models have been defined in the main text section \ref{sec:Simulation}.
		\begin{table*}[h]
			\caption{Comparisons of $E(Y_3)$ among methods in six evaluations: 
				Bias,
				Root mean square error (RMSE), interval scores (Ints), coverage probability (Covp), Monte Carlo standard deviation (MCSD) and average standard errors (Ave SE), from 500 simulation runs and for $B=300$ bootstrap.
			}
			\label{table1 Extreme Scenario} 
			\renewcommand{\arraystretch}{1.1}
			\resizebox{\textwidth}{!}{
				\begin{tabular}{l|rrrrrrrrrrrrr}
					\hline
					\hline
					\\
					& Bias & RMSE & Ints & Covp & MCSD & Ave SE & & Bias & RMSE & Ints & Covp & MCSD & Ave SE \\ 
					\hline
					& \multicolumn{6}{c}{\underline{Y correct P correct}} & & \multicolumn{6}{c}{\underline{Y correct P incorrect}} \\
					BR*      & -0.02 & 0.32 & 1.45 & 0.94 & 0.32 & 0.31   &  & -0.02 & 0.31 & 1.43 & 0.95 & 0.31 & 0.31   \\
					AIPW-I  & -0.01 & 0.33 & 1.49  & 0.95 & 0.33 & 0.32   &  & -0.01 & 0.32 & 1.47  & 0.94 & 0.32 & 0.31   \\
					AIPW-S  & -0.02 & 0.34 & 1.51  & 0.94 & 0.34 & 0.32   &  & 0.04  & 0.33 & 1.50  & 0.94 & 0.33 & 0.32   \\
					Paik    & -0.01 & 0.31 & 1.42  & 0.95 & 0.31 & 0.31   &  & -0.01 & 0.31 & 1.42  & 0.95 & 0.31 & 0.31   \\
					MMRM    & -0.01 & 0.31 & 1.42  & 0.95 & 0.31 & 0.31   &  & -0.01 & 0.31 & 1.42  & 0.95 & 0.31 & 0.31   \\
					WGEE    & 0.00  & 0.32 & 1.55  & 0.95 & 0.32 & 0.35   &  & 0.03  & 0.32 & 1.53  & 0.94 & 0.32 & 0.34   \\
					GEE-IND & -0.14 & 0.33 & 1.57  & 0.92 & 0.30 & 0.30   &  & -0.14 & 0.33 & 1.57  & 0.92 & 0.30 & 0.30   \\
					\hline
					& \multicolumn{6}{c}{\underline{Y incorrect P correct}} & & \multicolumn{6}{c}{\underline{Y incorrect P incorrect}} \\
					BR*      & -0.22 & 0.42 & 1.98 & 0.92 & 0.36 & 0.40   &  &  -0.82 & 0.89 & 8.36 & 0.43 & 0.35 & 0.40   \\
					AIPW-I  & -0.02 & 0.35 & 1.54  & 0.95 & 0.35 & 0.34   &  & -0.76 & 0.84 & 8.71  & 0.41 & 0.35 & 0.36   \\
					AIPW-S  & -0.03 & 0.69 & 2.57  & 0.93 & 0.69 & 0.52   &  & -0.78 & 0.95 & 9.61  & 0.49 & 0.54 & 0.48   \\
					Paik    & -0.76 & 0.83 & 8.82  & 0.39 & 0.33 & 0.34   &  & -0.76 & 0.83 & 8.82  & 0.39 & 0.33 & 0.34   \\
					MMRM    & -0.67 & 0.75 & 6.66  & 0.49 & 0.32 & 0.34   &  & -0.67 & 0.75 & 6.66  & 0.49 & 0.32 & 0.34   \\
					WGEE    & 0.16  & 0.43 & 2.05  & 0.93 & 0.40 & 0.41   &  & -0.41 & 0.81 & 4.45  & 0.82 & 0.70 & 0.45   \\
					GEE-IND & -2.78 & 2.81 & 85.62 & 0.00 & 0.34 & 0.35   &  & -2.78 & 2.81 & 85.62 & 0.00 & 0.34 & 0.35   \\ 
					\hline
					\hline
					
			\end{tabular}}
			\\
			{\tiny

			}
		\end{table*}
		\clearpage
		\begin{table*}
			\caption{Comparisons of estimating regression coefficients among methods in four evaluations: 
				Bias,
				Root mean square error (RMSE), interval scores (Ints), coverage probability (Covp), from 500 simulation runs and for $B=300$ bootstrap.
			}
			\label{table2 Extreme Scenario} 
			\renewcommand{\arraystretch}{1.1}
			\centering
			\resizebox{\textwidth}{!}{%
				\begin{tabular}{l|cccccccccccccc}
					\hline
					\hline
					& \multicolumn{4}{c}{coefficient of $x_2$}                         &  & \multicolumn{4}{c}{coefficient of time}     &  & \multicolumn{4}{c}{coefficient of time:$x_2$}  \\ \cline{2-5} \cline{7-10} \cline{12-15} 
					& Bias                     & RMSE & Ints  & Covp &  & Bias  & RMSE & Ints   & Covp &  & Bias  & RMSE & Ints   & Covp \\ \hline
					& \multicolumn{14}{c}{Y correct P correct}                                                                           \\
					BR* & -0.01 & 0.22 & 2.04 & 0.98 &   & 0.00 & 0.11 & 0.53 & 0.93 &   & -0.01 & 0.15 & 0.69 & 0.95 \\ 
					AIPW-I & 0.01 & 0.10 & 0.50 & 0.95 &   & 0.01 & 0.11 & 0.54 & 0.94 &   & -0.01 & 0.15 & 0.70 & 0.94 \\ 
					AIPW-S & 0.01 & 0.10 & 0.50 & 0.94 &   & 0.01 & 0.11 & 0.54 & 0.94 &   & -0.01 & 0.15 & 0.71 & 0.94 \\ 
					Paik & 0.01 & 0.10 & 0.50 & 0.94 &   & 0.01 & 0.11 & 0.51 & 0.95 &   & -0.01 & 0.14 & 0.67 & 0.94 \\ 
					MMRM & 0.01 & 0.10 & 0.50 & 0.94 &   & 0.00 & 0.11 & 0.51 & 0.95 &   & -0.01 & 0.14 & 0.67 & 0.94 \\ 
					WGEE & 0.01 & 0.10 & 0.54 & 0.95 &   & -0.01 & 0.11 & 0.57 & 0.95 &   & 0.00 & 0.15 & 0.70 & 0.95 \\ 
					GEE-IND & 0.01 & 0.10 & 0.50 & 0.94 &   & -0.04 & 0.11 & 0.53 & 0.94 &   & 0.01 & 0.15 & 0.65 & 0.95 \\  \hline
					& \multicolumn{14}{c}{Y correct P incorrect}                                                                         \\
					BR* & -0.03 & 0.90 & 2.79 & 0.98 &   & 0.01 & 0.11 & 0.51 & 0.94 &   & -0.01 & 0.14 & 0.68 & 0.94 \\ 
					AIPW-I & 0.01 & 0.10 & 0.50 & 0.95 &   & 0.01 & 0.11 & 0.53 & 0.95 &   & -0.01 & 0.15 & 0.69 & 0.94 \\ 
					AIPW-S & 0.01 & 0.10 & 0.50 & 0.94 &   & -0.01 & 0.11 & 0.52 & 0.95 &   & 0.02 & 0.15 & 0.69 & 0.94 \\ 
					Paik & 0.01 & 0.10 & 0.50 & 0.94 &   & 0.01 & 0.11 & 0.51 & 0.95 &   & -0.01 & 0.14 & 0.67 & 0.94 \\ 
					MMRM & 0.01 & 0.10 & 0.50 & 0.94 &   & 0.00 & 0.11 & 0.51 & 0.95 &   & -0.01 & 0.14 & 0.67 & 0.94 \\ 
					WGEE & 0.01 & 0.10 & 0.54 & 0.95 &   & -0.01 & 0.11 & 0.54 & 0.95 &   & 0.02 & 0.15 & 0.74 & 0.94 \\ 
					GEE-IND & 0.01 & 0.10 & 0.50 & 0.94 &   & -0.04 & 0.11 & 0.53 & 0.94 &   & 0.01 & 0.15 & 0.65 & 0.95 \\  \hline
					& \multicolumn{14}{c}{Y incorrect P correct}                                                                         \\
					BR* & 0.01 & 0.12 & 2.60 & 0.98 &   & -0.90 & 1.00 & 22.89 & 0.04 &   & 1.76 & 1.94 & 65.81 & 0.00 \\ 
					AIPW-I & 0.01 & 0.10 & 0.52 & 0.95 &   & 0.00 & 0.13 & 0.63 & 0.94 &   & 0.00 & 0.16 & 0.75 & 0.95 \\ 
					AIPW-S & 0.00 & 0.11 & 0.54 & 0.95 &   & 0.00 & 0.13 & 0.64 & 0.94 &   & 0.00 & 0.17 & 0.79 & 0.95 \\ 
					Paik & 0.01 & 0.11 & 0.64 & 0.95 &   & -0.90 & 0.91 & 24.02 & 0.00 &   & 1.24 & 1.26 & 37.13 & 0.00 \\ 
					MMRM & 0.25 & 0.25 & 10.00 & 0.00 &   & -3.25 & 3.26 & 117.52 & 0.00 &   & 6.00 & 6.00 & 240 & 0.00 \\ 
					WGEE & 0.25 & 0.25 & 9.96 & 0.01 &   & -3.14 & 3.14 & 112.74 & 0.00 &   & 6.00 & 6.00 & 239.91 & 0.00 \\ 
					GEE-IND & 0.25 & 0.25 & 10.00 & 0.00 &   & -3.36 & 3.36 & 121.69 & 0.00 &   & 6.00 & 6.00 & 240.00 & 0.00 \\  \hline
					& \multicolumn{14}{c}{Y incorrect P incorrect}                                                                       \\
					BR* & -0.09 & 1.83 & 7.69 & 0.97 &   & -3.33 & 3.33 & 119.92 & 0.00 &   & 6.08 & 6.08 & 238.35 & 0.00 \\ 
					AIPW-I & 0.01 & 0.11 & 0.51 & 0.95 &   & -0.32 & 0.35 & 3.30 & 0.42 &   & 0.06 & 0.17 & 0.75 & 0.95 \\ 
					AIPW-S & 0.01 & 0.11 & 0.52 & 0.95 &   & -0.34 & 0.37 & 3.72 & 0.39 &   & 0.10 & 0.20 & 0.83 & 0.92 \\ 
					Paik & 0.01 & 0.11 & 0.64 & 0.95 &   & -0.90 & 0.91 & 24.02 & 0.00 &   & 1.24 & 1.26 & 37.13 & 0.00 \\ 
					MMRM & 0.25 & 0.25 & 10.00 & 0.00 &   & -3.25 & 3.26 & 117.52 & 0.00 &   & 6.00 & 6.00 & 240.00 & 0.00 \\  
					WGEE & 0.25 & 0.25 & 9.92 & 0.01 &   & -3.30 & 3.30 & 118.40 & 0.01 &   & 6.00 & 6.00 & 238.71 & 0.00 \\ 
					GEE-IND & 0.25 & 0.25 & 10.00 & 0.00 &   & -3.36 & 3.36 & 121.69 & 0.00. &   & 6.00 & 6.00 & 240.00 & 0.00 \\ 
					\hline 
					\hline
				\end{tabular}%
			}
		\end{table*}
	\end{appendix}

	\newpage
	\bibliographystyle{imsart-nameyear}
	\bibliography{AOAS_DR}
	
\end{document}